\numberwithin{theorem}{section}
\newcommand{\TheTitle}{Triangulated Surface Denoising using High Order Regularization with Dynamic Weights}
\newcommand{\TheAuthors}{Zheng Liu, Rongjie Lai, Huayan Zhang and Chunlin Wu}
\title{{Triangulated Surface Denoising using High Order Regularization with Dynamic Weights}\thanks{Submitted to the editors DATE.}}
\author{Zheng Liu%
  \thanks{National Engineering Research Center of Geographic Information System, China University of Geosciences, Wuhan, China (\email{liu.zheng.jojo@gmail.com}).}%
  \and
  Rongjie Lai%
  \thanks{Department of Mathematics, Rensselaer Polytechnic Institute, United States
    (\email{lair@rpi.edu}).}
  \and
  Huayan Zhang%
    \thanks{School of Computer Science and Software, Tianjin Polytechnic University, Tianjin, China
    (\email{zhy101@mail.ustc.edu.cn}).}
    \and
  Chunlin Wu%
    \thanks{The Corresponding Author, School of Mathematical Sciences, Nankai University, Tianjin, China
    (\email{wucl@nankai.edu.cn}).}
}
\begin{document}

\maketitle

\begin{abstract}
    Recovering high quality surfaces from noisy triangulated surfaces is a fundamental important problem in geometry processing.
    Sharp features including edges and corners can not be well preserved in most existing denoising methods except the recent total variation (TV) and $\ell_0$ regularization methods.
    However, these two methods have suffered producing staircase artifacts in smooth regions.
    In this paper, we first introduce a second order regularization method for restoring a surface normal vector field, and then propose a new vertex updating scheme to recover the desired surface according to the restored surface normal field.
    The proposed model can preserve sharp features and simultaneously suppress the staircase effects in smooth regions which overcomes the drawback of the first order models.
    In addition, the new vertex updating scheme can prevent ambiguities introduced in existing vertex updating methods.
    Numerically, the proposed high order model is solved by the augmented Lagrangian method with a dynamic weighting strategy.
    Intensive numerical experiments on a variety of surfaces demonstrate the superiority of our method by visually and quantitatively.
\end{abstract}

\begin{keywords}
  Triangulated surface denoising, total variation, high order regularization, augmented Lagrangian method
\end{keywords}

\begin{AMS}
  65K10, 65D25, 65D18, 68U05
\end{AMS}

\section{Introduction} \label{sec:introduction}
Triangulated surfaces are used in a variety of fields, such as computer graphics \cite{Pmp2010}, computer-aided design \cite{Barnhill1985Surfaces}, computer vision \cite{Bronstein2008} and many others \cite{kimmel1998computing,kaus2004automated,lai2011framework}.
Triangulated surfaces are usually generated by some digital scanner devices or triangulation algorithms \cite{Lorensen1987}.
However, even with high-fidelity scanners, the scanning process inevitably produces noise due to local measurement errors \cite{Hope92}.
Such noise affects the quality of surfaces and usually cause errors in downstream geometry applications, such as surface reconstruction, segmentation and visualization \cite{Wang2012A}.
Thus, how to effectively remove noise to recover high quality surfaces is one of the most fundamental tasks in geometry processing.
In practice, it is difficult to distinguish noise and sharp features as they are of high frequency information. Meanwhile, it is also important to preserve smooth regions such as quadratic patches.
Therefore, it is still quite challenging to remove noise while preserving sharp features and smoothly curved regions.

Filtering schemes, which can be roughly classified into two categories (isotropic and anisotropic methods), are widely applied in surface denoising.
The isotropic methods \cite{Field1988Laplacian,taubin1995signal,desbrun1999implicit,nehab2005efficiently} are classical and simple, among which Laplacian smoothing \cite{Field1988Laplacian} is typical.
Laplacian smoothing is the process of reducing the surface area.
It smoothes the surface to remove the noise without considering surface geometric features.
Thus, it, as well as other isotropic methods, suffers surface shrinkage and blurs geometric features.
Later on, a variety of anisotropic methods \cite{clarenz2000anisotropic,desbrun2000anisotropic,bajaj2003anisotropic,fleishman2003bilateral,jones2003non,Zheng:11,solomon2014general} were proposed to provide geometric features preservation.
Compared to the isotropic methods, the anisotropic methods are more effective for preserving geometric features.
However, when the noise level increases, the anisotropic methods usually fail to produce satisfactory results.
Especially, this drawback is more severe for surfaces containing sharp features.

Variational methods are another kind of techniques for triangulated surface denoising proposed recently. To keep the sharp features, the variational models use sparsity regularization term.
Inspired by the great success of total variation (TV) regularization in image processing \cite{Rudin1992Nonlinear}, several researchers extended it to triangluated surface denoising.
The authors of \cite{elsey2009analogue} presented an analogue of TV by minimizing the absolute value of Gauss curvature.
Very recently, Zhang et al. proposed in \cite{Zhang:15} a vectorial TV based model on face normal field over triangulated surfaces.
This method achieved impressive results for preserving sharp features.
Another sparsity regularization is $\ell_0$ quasi-norm. Indeed, He and Schaefer \cite{He13} extended $\ell_0$ minimization \cite{Xu2011Image} to triangulated surfaces for preserving sharp features.
These methods achieve impressive results for surfaces consisting of flat regions and sharp features, e.g., polyhedron surfaces.
However, if a surface has smoothly curved regions, they tend to flatten the smooth regions. The reason is the staircase effect of the sparsity regularization in the gradient field.
The staircase effect of TV in image processing has been studied both from a theoretical and experimental points in previous works; see \cite{Chambolle1997Image,Chan2000High,Lysaker2003Noise,Lysaker2006Iterative,Maso2007A} and references therein.
To overcome this disadvantage of TV, high order PDEs \cite{Scherzer1998Denoising,Lysaker2003Noise,Hinterberger2006Variational,Bergounioux2010A,Lai13} and combination methods of TV and high order models
\cite{Chambolle1997Image,You2000Fourth,Chan2000High,Lysaker2003Noise,Lysaker2006Iterative,Hinterberger2006Variational,Chan2010A,Papafitsoros2014A} have been used in image processing community.
However, to our best knowledge, very few of high order models or combinations are known over triangulated surfaces.

Wavelet frame methods have been successfully applied in image restoration \cite{Cai2012Image,Cai2015Image}.
Recently, Dong et al. \cite{dong2009wavelet,Dong2016Multiscale,dong2017sparse} extended the wavelet frame methods to triangulated surfaces.
Their tight wavelet frame systems are potentially effective in many geometry applications, such as denoising and semi-supervised clustering.
Especially, for surface denoising, Dong et al. \cite{Dong2016Multiscale} proposed multiscale representation of surfaces using wavelet frames, which can achieve impressive denoising results for piecewise smooth surfaces with multiscale details.
Yang and Wang \cite{Yang2017A} proposed a wavelet frame based variational model in \cite{Yang2017A}.
Their method can effectively remove mixed Gaussian and impulse noise for the $\ell_1 + \ell_2$ fidelity term of their model.
However, the existing wavelet frame based methods have difficulty to recover surfaces consisting of sharp features.

Among the methods mentioned above, there are some methods belonging to two-stage methods, i.e., face normal filtering followed by updating vertices \cite{Taubin2001Linear,Yagou2002Mesh,Shen2004Fuzzy,Lee2006Feature,Sun:07,Sun:08,Zheng:11,Zhang:15,Zhang2015Guided}.
The difference between these two-stage methods is in their normal filtering strategies, e.g., a mean and median normal filter was applied in \cite{Yagou2002Mesh}, \cite{Sun:07} adopted trimmed quadratic weights for averaging the normals, Zhang et al. in \cite{Zhang:15} used a TV based model to filter a face normal field.
All the normal filtering strategies can either deal with smooth regions or sharp features well.
Moreover, all these two-stage methods use almost the same vertex updating model, which originated from Taubin \cite{Taubin2001Linear}, and has a beautiful implementation by Sun et al. \cite{Sun:07}.
When the noise level is low, the approach by Sun et al. \cite{Sun:07} can achieve good results.
However, when the noise level increases, the recovered vertex positions deviate far from those of the clean surface.
In this situation, method of Sun et al. \cite{Sun:07} suffers producing frequent foldovers.
Moreover, large scale noise in random directions make the matter even worse.
This is due to the method \cite{Sun:07} neglects the orientations of triangle face normals, which leads vertex updating ambiguity problem; see \cref{sec:VertexUpdateModel} for the explanation of this ambiguity.

As we can see, the aforementioned surface denoising schemes including the filtering, variational and wavelet frame methods can either properly handle smooth regions or sharp features. However, it is still quite challenging to handle both smooth regions and sharp features well.
In this paper, we propose a high order regularization model by introducing a new second order difference operator over triangulated surfaces.
The proposed model with a well-designed weighting function is applied to the surface face normal field, which has crucial advantage in handling surfaces consisting of both smooth regions and sharp features.
It preserves sharp features well and substantially suppresses the staircase effect.
It is numerically solved by the operator splitting and augmented Lagrangian method.
The weighting function enhances the sparsity of the proposed high order model and is implemented by a dynamic weights strategy.
After restoring the face normals, the surface vertices should be updated to match the filtered face normals.
Last but not least, a new vertex updating method is presented.
Compared to the traditional vertex updating method \cite{Sun:07}, our new method can eliminate ambiguities and reconstruct much better triangulated surfaces.
To summarize, the contributions of the paper are listed as follows:
\begin{itemize}
  \item We introduce a new second order difference operator and its adjoint operator in piecewise constant function space over triangulated surfaces.
        To the best of our knowledge, this second order operator is firstly defined over triangulated surfaces.
  \item We introduce a novel normal filtering model using the second order regularization with a well-designed weighting function, which can preserve sharp features and simultaneously prevent the staircase effect in smooth regions.
  \item We propose a new vertex updating method to recover surface vertices.
        The proposed method significantly reduces foldovers compared to the existing vertex updating methods.
\end{itemize}

The rest of this paper is organized as follows.
In \cref{sec:TVandHOTV}, we briefly review TV based models in image processing and reweighted $\ell_1$ minimization.
\Cref{sec:pieceWiseConstantFunctionSpace} provides the definitions of a new second order difference operator and two high order regularization models in piecewise constant function spaces.
The differences of this second order operator and the Laplace operator are discussed at the end of \cref{sec:pieceWiseConstantFunctionSpace}.
In \cref{sec:normalFiltering}, we present a high order regularization normal filtering model with a well-designed weighting function.
An augmented Lagrangian method is applied to solve the variational model with a dynamic weights strategy.
In \cref{sec:VertexUpdateModel}, a new vertex updating method is introduced for recovering the vertex positions with respect to the filtered face normals.
Our two-stage denoising method is discussed and compared to typical existing methods both qualitatively and quantitatively in \cref{sec:results}.
\Cref{sec:conclusion} concludes the paper.

\section{TV Based Models and Reweighted $\ell_1$ Minimization} \label{sec:TVandHOTV}
In this section, we present TV based models and reweighted $\ell_1$ minimization, since they are closely related to our approach.

\subsection{TV, vectorial TV and high order models for images}
Since the pioneering work of Rudin et al. \cite{Rudin1992Nonlinear}, TV has been proven very successful in image processing for its excellent edge-preserving property
\cite{Rudin1992Nonlinear,Lysaker2003Noise,Lysaker2006Iterative,Lai13}.
The TV denoising model (ROF) aims at solving
\begin{equation} \label{c-tvModel}
 \min \limits_{u} \int_{\mathrm{\Omega}} |\nabla u| +\frac{\alpha}{2} \int_{\mathrm{\Omega}} (u-f)^2,
\end{equation}
where $f$ is an observed noisy image, $\int_{\mathrm{\Omega}} |\nabla u|$ is the TV regularization and $\alpha$ is a positive fidelity parameter.
For $\mathfrak{M}$-channel images $\mathbf{u}, \mathbf{f} : \mathrm{\Omega} \rightarrow \mathbb{R}^{\mathfrak{M}}$, where $\mathbf{u}=(u_{1}, u_{2}, \ldots, u_{\mathfrak{M}})$ and $\mathbf{f}=(f_{1}, f_{2}, \ldots, f_{\mathfrak{M}})$, the model \cref{c-tvModel} can be naturally extended to its vectorial version for color image processing as follows:
\begin{equation} \label{c-vtvModel}
 \min \limits_{\mathbf{u}} \int_{\mathrm{\Omega}} \Big(\sum\limits_{i=1}^{\mathfrak{M}}|\nabla u_{i}|^{2}\Big)^{\frac{1}{2}} + \frac{\alpha}{2} \int_{\mathrm{\Omega}} |\mathbf{u}-\mathbf{f}|^2.
\end{equation}
The regularization of model \cref{c-vtvModel} referred as vectorial TV has been discussed in \cite{Sapiro:1996,Blomgren1998Color,chan2001total,Bresson2008}.
Both the objective functionals are coercive, proper, continuous, and strictly convex.
Thus, the problems \cref{c-tvModel} and \cref{c-vtvModel} have respectively, a unique minimizer.

A well known drawback of the above TV and vectorial TV models is the staircase effect \cite{Chambolle1997Image,Chan2000High,Lysaker2003Noise,Lysaker2006Iterative}.
To overcome this, high order models such as Lysaker-Lundervold-Tai (LLT) model \cite{Lysaker2003Noise} and Total Generalized Variation (TGV) model \cite{Bredies2010Total}, have been studied \cite{Scherzer1998Denoising,Lysaker2003Noise,Hinterberger2006Variational,Bergounioux2010A,Bredies2010Total}.
The idea is essentially to introduce high order derivatives to the energy regularization.
High order models in general perform well in recovering smooth regions, but they cannot compete with TV in dealing with discontinuous edges. A natural solution is to combine TV and high order models \cite{Chambolle1997Image,You2000Fourth,Chan2000High,Lysaker2003Noise,Lysaker2006Iterative,Hinterberger2006Variational,Chan2010A,Papafitsoros2014A}.
For examples, Lysaker and Tai \cite{Lysaker2006Iterative} used a convex combination of TV with LLT \cite{Lysaker2003Noise}.
In \cite{Chan2000High}, Chan et al. presented a model combining a TV term with a weighted Laplacian term to reduce the staircase effect while preserving sharp edges.
A model using infimal-convolution of the TV and high order term, was proposed by Chambolle and Lions in \cite{Chambolle1997Image}, in which the TV term was used to keep sharp edges while the high order term preserves smooth regions.
The key of these methods is to balance the contribution of the TV and high order term.
The balance is usually implemented by a weighting parameter or function, which needs to be tuned carefully.

\subsection{Reweighted $\ell_1$ Minimization}
The reweighted $\ell_1$ minimization was first presented by Cand\`{e}s et al. in \cite{Cand2008} to enhance the sparsity in sparse signal recovery. It outperforms $\ell_1$ minimization in situations where substantially fewer measurements are used to recover a signal.

The key of the reweighted $\ell_1$ minimization is to solve a sequence of weighted $\ell_1$ minimization problems
\begin{equation} \label{eqn:reweightedL1}
   x^{(k)} = \mathrm{arg} \min\limits_{x\in \mathbb{R}^{n}} \| W^{(k)} x \|_{1} \quad \textmd{s.t.} \quad Ax=b,
\end{equation}
where $W^{(k)}=\mathrm{diag}(w^{k}_1,...,w^{k}_n)$ is updated according to $x^{(k-1)}$.
Although there are a variety of reweighted $\ell_1$ algorithms proposed to update the weights \cite{Cand2008,wang2010sparse,Zhao2012Reweighted}, as a rough rule of thumb,
the weights should be inversely proportional to signal magnitudes \cite{Cand2008}.
For example, the reweighted method proposed by Cand\`{e}s et al. in \cite{Cand2008} is as follows :
\begin{equation*}
  w^{k}_{i} = \frac{1}{|x^{k-1}_{i}|+\epsilon}, \  i=1,... ,n, \  \textmd{for} \, \epsilon>0.
\end{equation*}

\section{Discrete High Order Regularization Models in Piecewise Constant Function Spaces Over Triangulated Surfaces} \label{sec:pieceWiseConstantFunctionSpace}
In this section, we introduce some notations followed by definitions of piecewise constant function spaces and difference operators over triangulated surfaces.
The discrete high order models in piecewise constant spaces are presented and discussed.

\subsection{Notations}
Let $M$ be a compact triangulated surface of arbitrary topology with no degenerate triangles in $\mathbb{R}^3$.
The set of vertices, edges and triangles of $M$ are denoted as
$\{v_i:i=0,1,\cdots,\mathrm{V}-1\}$,
$\{e_i:i=0,1,\cdots,\mathrm{E}-1\}$ and
$\{\tau_i:i=0,1,\cdots,\mathrm{T}-1\}$, respectively.
Here $\mathrm{V}$, $\mathrm{E}$ and $\mathrm{T}$ are the numbers of
vertices, edges and triangles of $M$, respectively.
If $v$ is an endpoint of an edge $e$, then we write it as $v\prec e$.
Similarly, $e\prec \tau$ denotes that $e$ is an edge of a triangle
$\tau$; $v\prec \tau$ denotes that $v$ is a vertex of a triangle
$\tau$.

Denote the $1$-ring of the triangle $\tau_i$ as $D_1(\tau_i)$, which is the set of the triangles sharing some common edges with $\tau_i$ indicated as green triangles in \cref{fig:neighborhood}(a).
Let $B_{1}(\tau_i) = \{l_{j,\tau_i}:i=0,1,\cdots,\mathrm{T}-1;j=0,1,2\}$ be the set of lines connecting the barycenter and vertices of $\tau_i$, where $j$ counterclockwise marks the vertex contained in $\tau_i$.
Namely, $l_{j,\tau_i}$ is the line connecting the barycenter of $\tau_i$ and the vertex marked as $j$ in $\tau_i$.
Let $B_2(\tau_i)$ be the set of lines connecting vertices of $\tau_i$ and barycenters of triangles in $D_1(\tau_i)$ indicated as blue lines in \cref{fig:neighborhood}(b).
Write the 1-disk of the vertex $v_i$ as $M_1(v_i)$ denoting
the indices of triangles containing $v_i$ indicated as gray triangles in \cref{fig:neighborhood}(c).
We write the 1-neighborhood of vertex $v_i$ as $N_{1}(v_i)$, which is the set of vertices connecting to $v_i$ indicated as orange vertices in \cref{fig:neighborhood}(d).

\begin{figure}[tbhp]
  \centering
  \subfloat[]{\label{fig:neighborhood-a}\includegraphics[width=0.25\textwidth]{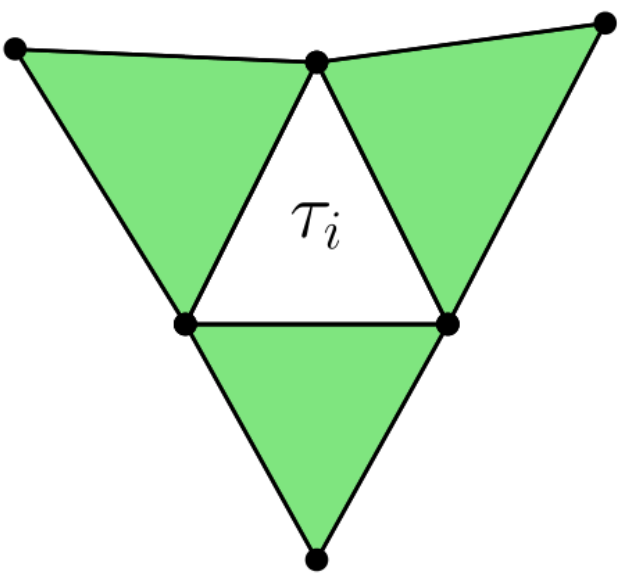}}
  \subfloat[]{\label{fig:neighborhood-b}\includegraphics[width=0.25\textwidth]{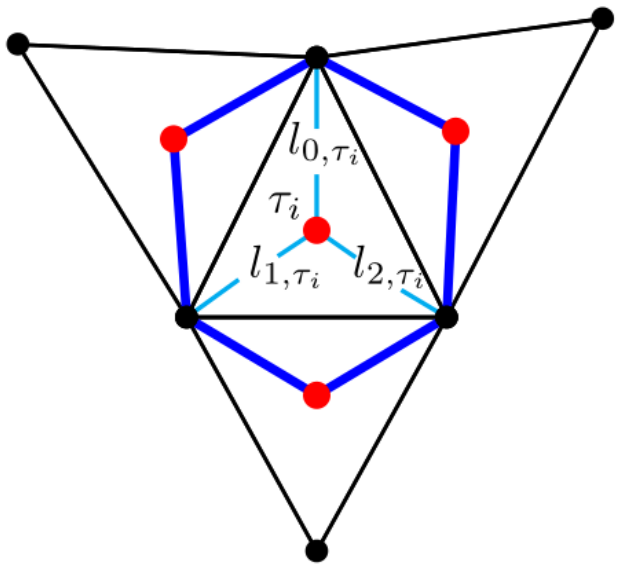}}
  \subfloat[]{\label{fig:neighborhood-c}\includegraphics[width=0.25\textwidth]{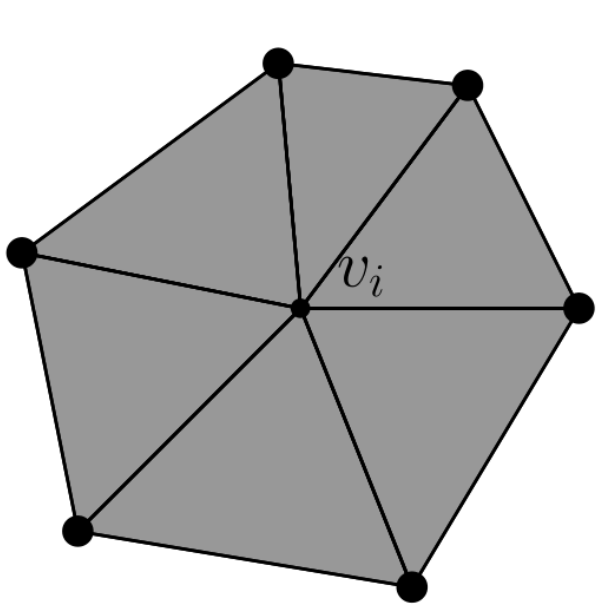}}
  \subfloat[]{\label{fig:neighborhood-d}\includegraphics[width=0.25\textwidth]{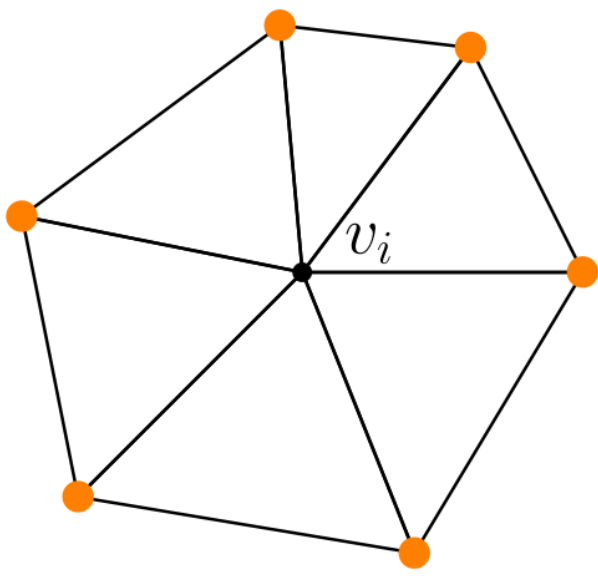}}
  \caption{The illustration of $D_{1}(\tau_i), B_{1}(\tau_i), B_{2}(\tau_i), M_{1}(v_i)$ and $N_{1}(v_i)$.
  The elements contained in $D_{1}(\tau_i), B_{1}(\tau_i), B_{2}(\tau_i), M_{1}(v_i)$ and $N_{1}(v_i)$ are plotted in \emph{green}, \emph{cyan}, \emph{blue}, \emph{gray} and \emph{orange} respectively.
  (a) $D_{1}(\tau_i)$ is the 1-ring of the triangle $\tau_i$, which refers to $3$ triangles;
  (b) $B_1(\tau_i)$ is the set of lines connecting the barycenter and vertices of $\tau_i$, which refers to $3$ lines.
  $B_2(\tau_i)$ is the set of lines connecting vertices of $\tau_i$ and barycenters of triangles contained in $D_{1}(\tau_i)$, which refers to $6$ lines;
  (c) $M_{1}(v_i)$ is the 1-disk of the vertex $v_i$, which refers to $6$ triangles;
  (d) $N_{1}(v_i)$ is the 1-neighborhood of $v_i$, which refers to $6$ vertices.}
  \label{fig:neighborhood}
\end{figure}


We further introduce the relative orientation of an edge $e$ to a triangle $\tau$, which is denoted by $\mathrm{sgn}(e,\tau)$ as
follows. First, we assume that all triangles are with counterclockwise orientation and all edges are with randomly chosen fixed orientations.
For an edge $e\prec\tau$, if the orientation of $e$ is consistent with the orientation of $\tau$, then $\mathrm{sgn}(e,\tau)=1$; otherwise $\mathrm{sgn}(e,\tau)=-1$.

\subsection{Piecewise Constant Function Spaces and Operators}
To describe piecewise constant data field, we present the concept of piecewise constant function space.
Compared to piecewise linear function space which is suitable to deal with vertex-based problems, we find that, for feature preserving geometry processing \cite{Zhang:15,liu2017geodesic}, the piecewise constant function space is more suitable which is related to piecewise constant finite element method in numerical PDE.
For normal-based triangulated surface denoising, the piecewise linear function space requires the input to be vertex normals, while the input of the piecewise constant function space is face normals.
The vertex normals are averaged from face normals.
The second order geometry information of this smoothed vertex normal field is mush less sparse than that of the face normal field.
Thus, it is more appropriate to discretize our high order regularization model in the piecewise constant function space for preserving sharp features.
We should point out that, over triangulated surfaces, the second order difference operator is newly defined in this paper.

We denote the space ${V}_{M} = \mathbb{R}^{\mathrm{T}}$, which is
isomorphic to the piecewise constant function space over $M$.
$u = (u_{0}, u_{1}, \cdots, u_{\mathrm{T}-1})\in {V}_{M}$
means that the value of $u$ restricted on the triangle $\tau$ is
$u_{\tau}$, which is written as $u|_{\tau}$ sometimes.
The inner product and norm in $V_{M}$ are as follows:
\begin{equation}\label{FacesInner}
(u^{1},u^{2})_{{V}_{M}}=\sum\limits_{\tau}u^{1}|_{\tau}u^{2}|_{\tau}s_{\tau}, \ \
\|u\|_{{V}_{M}} = \sqrt{(u,u)_{{V}_{M}}}, \forall u^{1},u^{2},u \in {V}_{M},
\end{equation}
where $s_{\tau}$ is the area of triangle $\tau$.
For any $u\in {V}_{M}$, the jump of $u$ over an edge $e$ is
\begin{eqnarray}\label{Jump} [u]_{e}=\left\{\begin{array}{rl}
\sum\limits_{\mathclap{\tau,e\prec\tau}} u|_{\tau}\mathrm{sgn}(e,\tau),  & e \not\subset \partial M\\
0,  &  e \subset \partial M
\end{array}\right..
\end{eqnarray}
It is then natural to define the gradient operator by
\begin{equation} \label{gradientOperator}
\nabla: V_M \rightarrow Q_M,  u \mapsto \nabla u, \ \  \nabla u|_e=[u]_e,\ \forall e, \  \mathrm{for} \ u\in V_{M},
\end{equation}
where $Q_M=\mathbb{R}^{\mathrm{E}}$ is the range of $\nabla$.
The $Q_{M}$ space is equipped with the following inner product and norm:

\begin{equation} 
(q^{1},q^{2})_{Q_{M}}=\sum\limits_{e}q^{1}|_eq^{2}|_e \mathrm{len}(e), \ \ \|q\|_{Q_{M}} = \sqrt{(q,q)_{Q_{M}}},
\end{equation}
for $q^{1},q^{2}, q \in Q_{M}$, where $\mathrm{len}(e)$ is the length of the edge $e$.


It is straightforward to derive the adjoint operator of $-\nabla$, the divergence operator $\mathrm{div}:\ Q_{M} \rightarrow V_{M}, q \mapsto \mathrm{div}q$, using the above inner products in $V_{M}$ and $Q_{M}$. For $q\in Q_{M}$, $\mathrm{div}q$ is given by
\begin{equation}\label{DivOperator}
(\mathrm{div}
q)|_{\tau}= -\frac{1}{s_{\tau}}\sum\limits_{\begin{subarray}{c}  e\prec\tau,  \\  e \not\subset \partial M  \end{subarray}}
q|_e \mathrm{sgn}(e,\tau) \mathrm{len}(e), \forall \tau.
\end{equation}



\begin{figure}[htbp]
  \centering
  \includegraphics[width=0.3\linewidth]{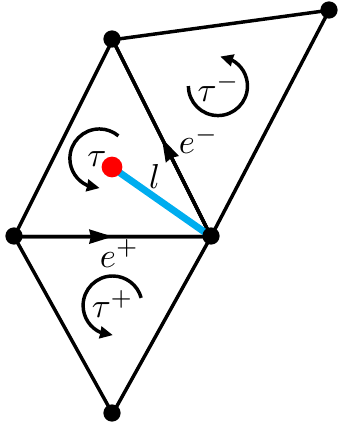}
  \caption{The illustration of the jump of $[u]_e$ over the line $l$ plotted in \emph{cyan}
  in triangle $\tau$
  with the barycenter plotted in \emph{red}.
  \label{fig:2gradient}}
\end{figure}

In the following section, we define a second order difference operator,
which will be used to construct our high order regularization models.
Let $l$ be a line connecting the barycenter and one vertex of the triangle $\tau$.
The two edges of triangle $\tau$, which share the common vertex of $l$, are denoted as $e^+$ and $e^-$ respectively.
The two triangles, contained in $D_1(\tau_i)$ and share these two edges, are denoted as $\tau^+$ and $\tau^-$ respectively. 
All aforementioned descriptions are indicated in \cref{fig:2gradient}.

We then define the jump of $[u]_e$ over the line $l$ in $\tau$ as
\begin{equation}\label{jumpFunction}
\begin{aligned}
 {[[u]]}_l =& [u]_{e^+}\mathrm{sgn}(e^+, \tau^+) + [u]_{e^-}\mathrm{sgn}(e^-,\tau^-) \\
           =& \big((u_{\tau}\mathrm{sgn}(e^+, \tau) + u_{\tau^+}\mathrm{sgn}(e^+, \tau^+))\mathrm{sgn}(e^+, \tau^+)\big) +  \\
            & \big((u_{\tau}\mathrm{sgn}(e^-, \tau) + u_{\tau^-}\mathrm{sgn}(e^-, \tau^-))\mathrm{sgn}(e^-, \tau^-)\big)  \\
           =& \big(u_{\tau^+} - u_{\tau}\big) + \big(u_{\tau^-} - u_{\tau}\big) \\
           =& u_{\tau^+} + u_{\tau^-} - 2u_{\tau},
\end{aligned}
\end{equation}
which is written as ${[[u]]}_{l,\tau}$ sometimes.
With Neumann boundary condition, we actually have, for any $u \in V_M$,
\begin{equation}\label{lJump}
[[u]]_{l}=
\left\{\begin{array}{rl}
u_{\tau^+} + u_{\tau^-} - 2u_{\tau}  ,& e^{+} \text{ or } e^{-} \not\subset \partial M\\
0,  &  e^{+} \text{ or } e^{-} \subset \partial M
\end{array}\right..
\end{equation}
From equation \cref{jumpFunction}, we can obviously see that,
the definition of $[[u]]_l$ is invariant under the choice of the orientation of edge $e$.


Then, the second order difference operator is defined by 
\begin{equation} \label{2ndOperator}
    \nabla^{2}:V_{M} \rightarrow P_{M} , u \mapsto \nabla^{2} u, \ \ \nabla^{2}u|_{\tau} = ( {[[u]]_{l_{0,\tau}}},  {[[u]]_{l_{1,\tau}}}, {[[u]]_{l_{2,\tau}}}),  \  \forall \tau, \ \mathrm{for} \ u \in V_M,
\end{equation}
where $P_M=\mathbb{R}^{\mathrm{T}} \times \mathbb{R}^{\mathrm{T}} \times \mathbb{R}^{\mathrm{T}}$ is the range of $\nabla^2$.
The $P_{M}$ space is equipped with the following inner product and norm:

\begin{equation}\label{EdgesInner}
(p^{1},p^{2})_{P_{M}}= \sum\limits_{\tau} \sum\limits_{l \in B_{1}(\tau)}p^{1}|_l p^{2}|_l \mathrm{len}(l), \ \ \|p\|_{P_{M}} = \sqrt{(p,p)_{P_{M}}},
\end{equation}
for $p^{1},p^{2}, p \in P_{M}$, where $\mathrm{len}(l)$ is the length of line $l$.

\begin{lemma} \label{lemma1}
The adjoint operator of $\nabla^2$ that, $(\nabla^2)^{\star}: P_{M} \rightarrow V_{M}, p \mapsto (\nabla^2)^{\star}p$, has the following form:

\begin{equation}\label{Div2Operator}
((\nabla^2)^{\star}p)|_{\tau} = \frac{1}{s_{\tau}}\big(\sum\limits_
{\mathclap{\mbox{\tiny $\begin{array}{c}
l \in B_2{(\tau)},\\
e^{+} \mathrm{or } \ e^{-} \not\subset \partial M\end{array}$}}}
p|_l \mathrm{len}(l) -
\sum\limits_
{\mathclap{\mbox{\tiny $\begin{array}{c}
l \in B_{1}(\tau),\\
e^{+} \mathrm{or} \ e^{-} \not\subset \partial M\end{array}$}}}2 p|_l \mathrm{len}(l)\big), \ \forall \tau, \ \mathrm{for} \ p \in P_M.
\end{equation}

\end{lemma}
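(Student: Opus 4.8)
The plan is to verify the defining identity of the adjoint,
\[
(\nabla^2 u, p)_{P_M} = (u, (\nabla^2)^{\star}p)_{V_M}, \qquad \forall\, u\in V_M,\ p\in P_M,
\]
and then to recover $(\nabla^2)^{\star}p$ by isolating, on the left-hand side, the coefficient of each unknown scalar $u_\tau$. Since the $V_M$ inner product \cref{FacesInner} is nondegenerate, matching these coefficients against the right-hand side determines $((\nabla^2)^{\star}p)|_\tau$ uniquely, so no separate uniqueness argument is needed.

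First I would expand the left-hand side using the $P_M$ inner product \cref{EdgesInner} together with the definition of $\nabla^2$ in \cref{2ndOperator}, obtaining $\sum_{\tau}\sum_{l\in B_1(\tau)} [[u]]_{l,\tau}\, p|_l\, \mathrm{len}(l)$. Substituting $[[u]]_{l,\tau} = u_{\tau^+} + u_{\tau^-} - 2u_\tau$ from \cref{jumpFunction}, and discarding via the Neumann convention \cref{lJump} any line whose associated $e^+$ or $e^-$ lies on $\partial M$, each surviving line $l$ splits into three pieces: a term $-2u_\tau\, p|_l\,\mathrm{len}(l)$ anchored at the triangle $\tau$ that owns $l$, and two terms $u_{\tau^+}\,p|_l\,\mathrm{len}(l)$ and $u_{\tau^-}\,p|_l\,\mathrm{len}(l)$ carried to the two triangles of $D_1(\tau)$ adjacent to the endpoint vertex of $l$.

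The next step is to regroup this double sum according to the triangle on which $u$ is evaluated. Fixing $\tau_0$, its coefficient splits into a \emph{self} part and a \emph{neighbor} part, and these are disjoint since a triangle is never its own neighbor. The self part collects the $-2u_{\tau_0}$ pieces produced by the lines $l\in B_1(\tau_0)$, giving $-2\sum_{l\in B_1(\tau_0)} p|_l\,\mathrm{len}(l)$. The neighbor part arises precisely when $\tau_0$ plays the role of $\tau^+$ or $\tau^-$ for a line $l$ owned by a neighbor $\tau$, i.e.\ when $\tau\in D_1(\tau_0)$ and $l$ joins the barycenter of $\tau$ to one of the two endpoints of the edge shared by $\tau$ and $\tau_0$. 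The combinatorial heart of the proof — and the step I expect to be the main obstacle — is to verify that these neighbor lines are exactly the elements of $B_2(\tau_0)$: each of the three neighbors $\tau\in D_1(\tau_0)$ shares an edge with $\tau_0$ whose two endpoints are vertices of $\tau_0$, and the two segments from the barycenter of $\tau$ to those endpoints are by definition the two lines of $B_2(\tau_0)$ attached to $\tau$; summing over the three neighbors yields all six lines of $B_2(\tau_0)$, each occurring once and carrying the same length whether read in $B_1(\tau)$ or in $B_2(\tau_0)$. One must check carefully that there is no double counting and that the boundary restriction of \cref{lJump} prunes both sums down to exactly the lines displayed in the statement.

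Granting that identification, the neighbor part equals $\sum_{l\in B_2(\tau_0)} p|_l\,\mathrm{len}(l)$, so that
\[
(\nabla^2 u, p)_{P_M} = \sum_{\tau_0} u_{\tau_0}\Big(\sum_{l\in B_2(\tau_0)} p|_l\,\mathrm{len}(l) - 2\sum_{l\in B_1(\tau_0)} p|_l\,\mathrm{len}(l)\Big).
\]
Comparing with $(u,(\nabla^2)^{\star}p)_{V_M} = \sum_{\tau_0} u_{\tau_0}\,((\nabla^2)^{\star}p)|_{\tau_0}\, s_{\tau_0}$ from \cref{FacesInner} and dividing by $s_{\tau_0}$ gives \cref{Div2Operator}, which completes the argument.
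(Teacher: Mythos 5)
Your proposal is correct and follows essentially the same route as the paper's proof: expand both inner products, substitute $[[u]]_{l} = u_{\tau^+} + u_{\tau^-} - 2u_\tau$ with the boundary lines discarded, regroup the double sum by the triangle carrying $u$, identify the neighbor contributions with the lines of $B_2(\tau)$, and match coefficients of $u_\tau$ against $\sum_\tau u_\tau ((\nabla^2)^\star p)|_\tau s_\tau$. The only difference is one of exposition — you justify explicitly the combinatorial identification of the neighbor lines with $B_2(\tau)$ (each of the three neighbors contributing its two lines to the shared edge's endpoints, with no double counting), a step the paper performs silently in its second displayed equality.
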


\begin{proof}
As the definition of adjoint operator, we have
\begin{equation} \label{adjointoperator}
  \langle \nabla^{2}u, p \rangle_{P_M} = \langle u, (\nabla^2)^\star p \rangle_{V_M}.
\end{equation}
By using the inner products \cref{EdgesInner} and \cref{FacesInner} in $P_M$ and $V_M$, \cref{adjointoperator} can be reformulated as
\begin{equation} \label{innerAdjointOperator}
  \sum\limits_{\tau} \sum\limits_{l \in B_1{(\tau)}} [[u]]_{l} p|_l \mathrm{len}(l)  = \sum\limits_{\tau}{ u_{\tau} ((\nabla^2)^\star p) |_{\tau} s_{\tau}}.
\end{equation}
By using \cref{lJump}, the left-hand side of \cref{innerAdjointOperator} is actually
\begin{equation*}
\begin{aligned}
\sum\limits_{\tau} \sum\limits_{l \in B_1{(\tau)}}{[[u]]_{l} p|_l \mathrm{len}(l)}
= & \sum\limits_{\tau} \big( \sum\limits_{\mathclap
{\mbox{\tiny$\begin{array}{c}
l \in B_1{(\tau)}, \\
e^{+} \text{or }e^{-}\not\subset\partial M\end{array}$}}} (u_{\tau^+} + u_{\tau^-} - 2u_{\tau} ) p|_l \mathrm{len}(l)  \big)\\
= & \sum\limits_{\tau}\big(  \sum\limits_{\mathclap{\mbox{\tiny$\begin{array}{c}
l \in B_2{(\tau)},\\
e^{+} \text{or }e^{-} \not\subset \partial M\end{array}$}}} u_{\tau} p|_l \mathrm{len}(l)- \sum\limits_{\mathclap{\mbox{\tiny$\begin{array}{c}
l \in B_1{(\tau)},\\
e^{+} \text{or }e^{-} \not\subset \partial M\end{array}$}}}2 u_{\tau} p|_l \mathrm{len}(l)\big) \\
= & \sum\limits_{\tau} u_{\tau} \big(\sum\limits_{\mathclap{\mbox{\tiny$\begin{array}{c}
l \in B_2{(\tau)},\\
e^{+} \text{or }e^{-} \not\subset \partial M\end{array}$}}} p|_l \mathrm{len}(l) -
\sum\limits_{\mathclap{\mbox{\tiny$\begin{array}{c}
l \in B_1{(\tau)},\\
e^{+} \text{or }e^{-} \not\subset \partial M\end{array}$}}} 2 p|_l \mathrm{len}(l)
\big).
\end{aligned}
\end{equation*}
Therefore, we have
\begin{equation*}
  \sum\limits_{\tau} u_{\tau} \big(\sum\limits_{\mathclap{\mbox{\tiny$\begin{array}{c}
l \in B_2{(\tau)},\\
e^{+} \text{or }e^{-} \not\subset \partial M\end{array}$}}} p|_l \mathrm{len}(l) -
 \sum\limits_{\mathclap{\mbox{\tiny$\begin{array}{c}
l \in B_1{(\tau)},\\
e^{+} \text{or} e^{-} \not\subset \partial M\end{array}$}}} 2 p|_l \mathrm{len}(l)
\big) = \sum\limits_{\tau}{ u_{\tau} \big( ((\nabla^2)^\star p) |_{\tau} s_{\tau} \big)}.
\end{equation*}
Then, the assertion follows this immediately.
\end{proof}

To handle vectorial data, we extend the above concepts to vectorial cases. Two vectorial spaces $\mathbf{V}_M, \mathbf{P}_M$ are as follows:
$$\mathbf{V}_{M} =\underbrace{ V_{M} \times \cdots \times V_M}\limits_{\mathfrak{N}}, \mathbf{P}_{M} =\underbrace{ P_M \times \cdots \times P_M}\limits_{\mathfrak{N}},$$
for $\mathfrak{N}$-channel data.
The inner products and norms in $\mathbf{V}_M$ and $\mathbf{P}_M$ are as follows:
$$(\mathbf{u}^{1},\mathbf{u}^{2})_{\mathbf{V}_M} = \sum\limits_{1\leq i \leq \mathfrak{N}}(u_{i}^{1},u_{i}^{2})_{V_M},
\ \  \|\mathbf{u}\|_{\mathbf{V}_{M}} =
\sqrt{(\mathbf{u},\mathbf{u})_{\mathbf{V}_{M}}},\mathbf{u}^{1},\mathbf{u}^{2},\mathbf{u} \in \mathbf{V}_{M}$$
$$(\mathbf{p}^{1},\mathbf{p}^{2})_{\mathbf{P}_M} =
\sum\limits_{1\leq i\leq \mathfrak{N}}(p_{i}^{1},p_{i}^{2})_{P_M}, \
\  \|\mathbf{p}\|_{\mathbf{P}_{M}} =
\sqrt{(\mathbf{p},\mathbf{p})_{\mathbf{P}_{M}}},\mathbf{p}^{1},\mathbf{p}^{2},\mathbf{p} \in \mathbf{P}_{M}.$$
We mention that $\nabla \mathbf{u},  \nabla^{2} \mathbf{u}$ and their adjoint operators can be computed channel by channel.

\subsection{Two Discrete High Order Models Over Triangulated Surfaces}
Assume $f \in V_M$ be an observed noisy scalar field on $M$.
A high order regularization model reads
\begin{equation}\label{hotvModel}
 \min \limits_{u\in V_M}\{E_{\mathrm{ho}}(u) =R_{\mathrm{ho}}(\nabla^{2} u)+\frac{\alpha}{2}\left\| u-f \right\|^2_{V_M}\},
\end{equation}
where
\begin{equation*}
  R_{\mathrm{ho}}(\nabla^{2} u) = \sum\limits_{\tau} \sum\limits_{l \in B_1{(\tau)}} \big| [[u]]_{l} \big| \mathrm{len}(l) =  \sum\limits_{l} \big| [[u]]_{l} \big| \mathrm{len}(l),
\end{equation*}
is the second order variation of $u$, and $\alpha$ is a positive fidelity parameter.
The minimization problem \cref{hotvModel} has a unique minimizer under the assumption of no degenerate triangles on $M$.

Our second discrete high order regularization model is for vector field denoising over surfaces. Suppose $\mathbf{f}=(f_{1},f_{2},...,f_{\mathfrak{N}})$ be an observed noisy vector field.
The vectorial version of \cref{hotvModel} reads
\begin{equation} \label{vhotvModel}
 \min \limits_{\mathbf{u}\in \mathbf{V}_M}\{E_{\mathrm{vho}}(\mathbf{u}) =R_{\mathrm{vho}}(\nabla^{2} \mathbf{u})+\frac{\alpha}{2}\left\| \mathbf{u}-\mathbf{f} \right\|^2_{\mathbf{V}_M}\},
\end{equation}
where
\begin{equation*}
R_{\mathrm{vho}}(\nabla^{2} \mathbf{u}) 
= \sum\limits_{\tau} \sum\limits_{l \in B_1{(\tau)}}\Big(\sum\limits_{i=1}^{\mathfrak{N}} \big| [[u_i]]_{l} \big||^{2}\Big)^{\frac{1}{2}}\mathrm{len}(l)
= \sum\limits_{l}\Big(\sum\limits_{i=1}^{\mathfrak{N}} \big| [[u_i]]_{l} \big||^{2}\Big)^{\frac{1}{2}}\mathrm{len}(l),
\end{equation*}
is the vectorial high order semi-norm.


\subsection{A Discussion On the Second Order Difference and Laplace Operator}
The Laplacian is the mostly frequently used high order operator in geometry processing.
Of course, it can also be used to construct high order regularization model.
In this subsection, we discuss differences between the second order difference operator \cref{2ndOperator} and Laplace operator in piecewise constant function space over triangulated surfaces.
The high order regularization models using these two high order operators are also compared.

For clarity, we firstly give the discretization of Laplace operator in piecewise constant function space.
By using the gradient operator \cref{gradientOperator} and divergence operator \cref{DivOperator}, the Laplace operator $\mathrm{\Delta}:V_{M} \rightarrow V_{M}, u \mapsto \mathrm{\Delta}u$, can be derived as:
 \begin{equation}\label{LaplaceOperator}
   \mathrm{\Delta} u |_{\tau} =  -\frac{1}{s_{\tau}}
   \sum\limits_{\begin{subarray}{c}  e \prec \tau,  \tau \cap \tau_{i} = e, \\ e \not\subset \partial M   \end{subarray}}
 (u_{\tau} - u_{\tau_i}) \mathrm{len}(e), \forall \tau.
 \end{equation}
For a noisy scalar field, an $\ell_1$-norm Laplacian regularization model reads
\begin{equation}\label{laplacianModel}
 \min \limits_{u\in V_M}\{E_{\mathrm{lap}}(u) =R_{\mathrm{lap}}(\mathrm{\Delta} u)+\frac{\alpha}{2}\left\| u-f \right\|^2_{V_M}\},
\end{equation}
where the regularization term is defined as
\begin{equation*}
  R_{\mathrm{lap}}(\mathrm{\Delta} u) =  \sum\limits_{\tau} \big| \mathrm{\Delta} u|_{\tau}  \big| s_{\tau}.
\end{equation*}

As we can see, discretizations of the second order operator \cref{2ndOperator} and Laplace operator \cref{LaplaceOperator} are totally different.
Our second order operator can be seen as a set of second order central differences defined over $l$ along three different directions in one triangle.
The second order regularization term of high order model \cref{hotvModel} is an analogue of the regularization term that
\begin{equation*}
  \int_{\mathrm{\Omega}} |u_{xx}| +  |u_{yy}| dx dy
\end{equation*}
in 2D domain, while the Laplacian regularization term of model \cref{laplacianModel} can be seen as an analogue to
\begin{equation*}
  \int_{\mathrm{\Omega}} |u_{xx} +  u_{yy}| dx dy.
\end{equation*}
The regularization term using second order operator \cref{2ndOperator} can be computed separately in different directions, while that using the Laplace operator \cref{LaplaceOperator} cannot.
Compared to the $\ell_1$-norm Laplacian model \cref{laplacianModel}, the second order regularization model \cref{hotvModel} is more effective for recovering sharp signals over triangulated surfaces; see the comparison of vectorial implementations of these two models in \cref{fig:secondOrderAndLaplaceOperator}.
Moreover, the second order regularization model overcomes the staircase effect introduced by first order models.

\section{Normal Filtering using High Order Model with Dynamic Weights}
\label{sec:normalFiltering}
The recent TV \cite{Zhang:15} and $\ell_0$ \cite{He13} based minimization methods use the concept of sparsity of first order information to remove noise from triangulated surfaces.
These methods preserve sharp features well, but suffer from the staircase effect in smooth regions inevitably.
In this section, a high order normal filtering model with dynamic weighs is proposed for preserving sharp features and removing
the staircase effect in smooth regions simultaneously.
The dynamic weights are applied in the proposed model to significantly improve effectiveness for preserving sharp features.
\subsection{High Order Normal Filtering Model with Dynamic Weights} \label{sec:HOTVModel}
For a given noisy surface $M^{in}$, we write the face normals as $\mathbf{N}^{in}$.
To remove noise in $\mathbf{N}^{in}$ through our vectorial high order model \cref{vhotvModel} with multiple spherical constraints, we propose the following variational model:

\begin{equation} \label{InitHOTVmodel}
\begin{aligned}
\min \limits_{\textbf{N}\in C_\textbf{N}}\{E(\textbf{N}) = R_{\mathrm{vhow}}(\nabla ^ 2 \textbf{N})+\frac{\alpha}{2}\left\|\textbf{N}-\textbf{N}^{in}\right\|^2_{\mathbf{V}_M}\},
\end{aligned}
\end{equation}
where
\begin{align}
C_\textbf{N}                        &=      \{\textbf{N} \in \mathbf{V}_M:\left\|\textbf{N}_\tau\right\|=1,\ \forall\tau\},\notag\\
R_{\mathrm{vhow}}(\nabla ^ 2 \textbf{N})     &=      \sum\limits_{l}w_l{\Big(\sum\limits_{i=1}\limits^3(\nabla^2 N_i |_l)^{2}\Big)^{\frac{1}{2}}}\mathrm{len}(l). \notag
\end{align}
Note that $\mathbf{V}_{M}$ denotes $3$-channel $V_M$ here.
The dynamic weight $w_l$ on each $l$ of triangle is defined as
\begin{equation}\label{eqn:weightFunction}
  w_l = \exp(-\left\| \mathbf{N}_{\tau^{+}} + \mathbf{N}_{\tau^{-}} - 2 \mathbf{N}_{{\tau}}  \right\|^4).
\end{equation}
The weight $w_l$ is designed to monotonically decrease with respect to the absolute second order difference defined on $l$.

For most surfaces, the proposed vectorial high order regularization model \cref{vhotvModel} can achieve good denoising results.
However, in rare cases, where the noise level is increased, the proposed model \cref{vhotvModel} may smooth some sharp features a little.
Thus, we use the dynamic weights, updated with respect to the face normals in each iteration, to enhance the sparsity of our high order model for improving sharp features reconstruction.
The dynamic weights scheme is inspired by Cand\`{e}s et al. in \cite{Cand2008}.
These dynamic weights penalize smooth regions (smoothly curved regions and flat regions) more than sharp features, which can be applied to achieve the lower-than-$\ell_1$-sparsity effect.
In general, the combination of the high order model and the dynamic weights is able to preserve sharp features well and at the same time
recover smooth regions without staircase effects.



%

\subsection{Augmented Lagrangian Method for Solving the High Order Normal Filtering Model} \label{sec:alm}
It is challenging to solve the normal filtering model \cref{InitHOTVmodel} due to the non-differentiability and nonlinear constraints.
Recently, the variable splitting and augmented Lagrangian method (ALM) have attained intensive attention for their efficiency in many $\ell_1$ related optimization problems \cite{Oden1992Augmented,Osher2005An,Wu2010Augmented}.
Hence, we introduce an auxiliary variable and use ALM to handle the regularization term of \cref{InitHOTVmodel}.
Moreover, in each iteration of ALM, the weight \cref{eqn:weightFunction} is updated dynamically.


We first introduce a new variable $\textbf{p}\in \mathbf{P}_M$ and rewrite the problem \cref{InitHOTVmodel} as

\begin{equation}\label{HOTVmodel}
\begin{aligned}
\min \limits_{\textbf{N}\in \mathbf{V}_M,\textbf{p}\in \mathbf{P}_M}\{& R_{\mathrm{vhow}}(\textbf{p})+  \frac{\alpha}{2}\left\|\textbf{N}-\textbf{N}^{in}\right\|^2_{\mathbf{V}_M}
+ \sigma_{C_{\mathbf{N}}}(\textbf{N})
\} \\
&\textmd{s.t.} \quad \textbf{p} = \nabla ^2 \textbf{N},
\end{aligned}
\end{equation}
where
\begin{eqnarray*}
\sigma_{C_{\mathbf{N}}}(\textbf{N})= \left
\{ \begin{array}{rl}
0,         & \mathbf{N} \in  C_{\mathbf{N}} \\
+\infty,   & \mathbf{N} \notin  C_{\mathbf{N}}.
\end{array}
\right.
\end{eqnarray*}

Accordingly, we define the following augmented Lagrangian function

\begin{equation}\label{ALGfunction}
\begin{aligned}
  L(\textbf{N}, \textbf{p}; \lambda_\mathbf{p}) = & R_{\mathrm{vhow}}(\textbf{p})+ \frac{\alpha}{2}\left\|\textbf{N}-\textbf{N}^{in}\right\|^2_{\mathbf{V}_M} + \sigma_{C_{\mathbf{N}}}(\textbf{N}) \\ & + {(\lambda_\mathbf{p}, \textbf{p} - \nabla ^2 \textbf{N})}_{\mathbf{P}_{M}} + \frac{r_\mathbf{p}}{2}\|\textbf{p} - \nabla ^2 \textbf{N}\|^2_{\mathbf{P}_{M}},
\end{aligned}
\end{equation}
where $\lambda_\mathbf{p}$ is a Lagrange multiplier and $r_\mathbf{p}$ is a positive real number.
This primal variables update procedure can be separated into two subproblems:

\begin{itemize}
  \item The $\mathbf{N}$-sub problem: given $\textbf{p}$
  \begin{equation}\label{eqn:n-sub}
    \min\limits_{\mathbf{N} \in \mathbf{V}_{M}}\frac{\alpha}{2}\|\mathbf{N} - \mathbf{N}^{in}\|^2_{\mathbf{V}_M} + (\lambda_\mathbf{p}, -\nabla^{2}\mathbf{N})_{\mathbf{P}_M} + \frac{r_\mathbf{p}}{2}\|\mathbf{p} - \nabla^{2}{\mathbf{N}} \|^{2}_{\mathbf{P}_M} + \sigma_{C_{\mathbf{N}}}(\textbf{N});
  \end{equation}


  \item The $\mathbf{p}$-sub problem: given $\textbf{N}$
  \begin{equation}\label{eqn:p-sub}
    \min\limits_{\mathbf{p} \in \mathbf{P}_M} R_{\mathrm{vhow}}(\mathbf{p}) + (\lambda_\mathbf{p}, \mathbf{p})_{\mathbf{P}_M} + \frac{r_\mathbf{p}}{2}\|\mathbf{p}-\nabla^{2}\mathbf{N} \|^2_{\mathbf{P}_M}.
  \end{equation}
\end{itemize}

The $\mathbf{N}$-sub problem is a quadratic minimization with orthogonality constraints.
An iterative method is needed to find its exact solution \cite{Goldfarb2009A,Wen2013A,Lai2014A}.
Due to error forgetting and cancellation properties of ALM for $\ell_1$ minimization problem \cite{Yin2013Error}, the sub-optimization problem \cref{eqn:n-sub} does not have to be solved very accurately.
Here we adopt an approximate strategy to balance the precision and computational efficiency.
We first ignore $\sigma_{C_{\mathbf{N}}}(\textbf{N})$ and solve a quadratic programming and then project the minimizer to an unit sphere.
The quadratic problem (without constraints) has the first order optimality condition
\begin{equation}\label{eqn:optimalitycondition}
  r_\mathbf{p} ((\nabla^2)^{\star}\mathbf{\nabla^2N}) + \alpha \mathbf{N} = r_\mathbf{p} ((\nabla^2)^{\star}\mathbf{p}) + (\nabla^2)^{\star}\lambda_{\mathbf{p}}  + \alpha \mathbf{N}^{in}.
\end{equation}
This equation can be reformulated into a sparse and positive semidefinite linear system, 
which can be solved by various well-developed numerical packages.
Here we use conjugate gradient (CG) method to solve the problem.
The maximum number of iteration of CG method is set to be $10$ for efficiency.
Then, we directly project the solution onto the unit sphere.


Next, we solve the $\mathbf{p}$-sub problem \cref{eqn:p-sub}.
By \cref{EdgesInner}, this problem can be written as
\begin{equation}\label{eqn:p-sub0}
    \min\limits_{\mathbf{p}}\sum_{l} w_{l}|\mathbf{p}_l| \mathrm{len}(l) + \sum_{l}(\lambda_{p_l}, \mathbf{p}_l)\mathrm{len}(l) + \sum_{l}\frac{r_\mathbf{p}}{2} |\mathbf{p}_l - (\nabla^2 \mathbf{N})|_l|^2 \mathrm{len}(l).
\end{equation}
The problem \cref{eqn:p-sub0} is decoupled and can be solved line-by-line. For each line $l$ connecting the barycenter and one vertex of one triangle, we need to solve
\begin{equation*} 
    \min\limits_{\mathbf{p}_l}w_{l}|\mathbf{p}_l| + (\lambda_{p_l}, \mathbf{p}_l) + \frac{r_\mathbf{p}}{2} |\mathbf{p}_l - (\nabla^2 \mathbf{N})|_l|^2,
\end{equation*}
which has a closed form solution

\begin{equation} \label{eqn:p-closedSolution}
\textbf{p}_l= \left
\{ \begin{array}{rl}
(1- \frac{1}{r_\mathbf{p} |\xi_l|})\xi_l,         & |\xi_l| > \frac{w_l}{r} \\
0,   & |\xi_l| \leq \frac{w_l}{r},
\end{array}
\right.
\end{equation}
with
\begin{equation*}
  \xi = \nabla^2 \mathbf{N} - \frac{\lambda_\mathbf{p}}{r_\mathbf{p}}.
\end{equation*}

In summary, the algorithm of high order normal filtering model \cref{InitHOTVmodel} is given in \Cref{alg:normalFiltering}.
Based on the variable splitting and ALM, this algorithm solves the non-differentiability problem with nonconvex constraints by iterating several simple operations.
We should point out that, in the conventional reweighted $\ell_1$ minimization \cref{eqn:reweightedL1}, the minimization problem with fixed weights is usually solved exactly.
Therefore, the reweighted strategy is time-consuming.
In contrast, \Cref{alg:normalFiltering} updates the weights in each iteration.
It can be regarded as an inexact but more efficient version of the conventional reweighted minimization algorithm.
Although we currently cannot give a rigorous proof of convergence for \Cref{alg:normalFiltering}, our numerical experiments strongly validate it in practice.
A theoretical analysis of this algorithm is worthy of the future research.


\begin{algorithm}
\caption{Augmented Lagrangian Method for High Order Normal Filtering with Dynamic Weights}
\label{alg:normalFiltering}
\begin{algorithmic}[1]
\State{\textbf{Initialization}: $\lambda^{0}_{\mathbf{p}} =0$, $\mathbf{N}^{-1}=0$, $\mathbf{P}^{-1}=0$, $k=0$, $K=100$, $\epsilon = 1e-4$}
\Do

\State{\textbf{1. Compute} $\mathbf{N}^{k}$ from \cref{eqn:optimalitycondition}, for fixed ($\lambda_\mathbf{p}^k$, $\mathbf{p}^{k-1}$); \textbf{Normalize} $\mathbf{N}^k$}

\State{\textbf{2. Compute} $\mathbf{p}^k$ from \cref{eqn:p-closedSolution}, for fixed $(\lambda^{k}_{\mathbf{p}}, \mathbf{N}^k)$}

\State{\textbf{3. Update} Lagrange multiplier $\lambda^{k+1}_{\mathbf{p}} =  \lambda^{k}_{\mathbf{p}} + r_{\mathbf{p}}(\mathbf{p}^k - \nabla^2 \mathbf{N}^k) $}

\State{\textbf{4. Update} each weight $w_l$ through \cref{eqn:weightFunction} with respect to $\mathbf{N}^{k}$}

\doWhile {$\|\mathbf{N}^{k} - \mathbf{N}^{k-1}\|_{\mathbf{V}_M} < \epsilon$ or $k>K$} \\
\Return $\mathbf{N}$
\end{algorithmic}
\end{algorithm}

\section{Folding Free Vertex Updating Method}
\label{sec:VertexUpdateModel}

After restoring the face normal field by \cref{alg:normalFiltering}, the positions of vertices need to be reconstructed to match the updated face normals.
As mentioned in \cref{sec:introduction}, all the existing two-stage methods \cite{Taubin2001Linear,Yagou2002Mesh,Shen2004Fuzzy,Lee2006Feature,Sun:07,Sun:08,Zheng:11,Zhang:15,Zhang2015Guided} use the same vertex updating model
\begin{equation}\label{eq:Sun}
  \min\limits_{v} \sum\limits_{\tau}\sum\limits_{(v_i, v_j) \in \tau} s_{\tau}(\mathbf{N}_{\tau} \cdot (v_i - v_j))^2,
\end{equation}
where $s_{\tau}$ is the area and $\mathbf{N}_{\tau}$ is the filtered normal of $\tau$.
The gradient descent method is used to minimize this optimization problem, where its initialization is the restored face normal field.
This optimization problem is to penalize the non-orthogonality between the filtered face normal and the three edges at each face over the surface.
However, when a surface is corrupted by noise in random directions, the vertex updating method \cite{Sun:07} usually produces foldings, even with the exact (ground truth) face normals.
In addition, large scale noise make this phenomenon even worse; see the second column of \cref{fig:vertexUpdatingComparisons}.
The reason is that the model \cref{eq:Sun} only penalizes the non-orthogonality and can not distinguish $-\mathbf{N}_{\tau}$ and $\mathbf{N}_{\tau}$. Thus, the model neglects the orientations of triangle face normals and leads to updating ambiguities.
In other words,
a vertex $v_i$ of triangle $\tau$ may be updated along the direction $-\mathbf{N}_{\tau}$ instead of $\mathbf{N}_{\tau}$.
These triangle-wise orientation ambiguities cause inconsistent normal vectors crossing different triangles.


To address the orientation ambiguity problem, we propose a new vertex updating method, which reconstructs the surface from a given normal vector field by solving the following minimization problem:
\begin{equation}\label{eq:vertexUpdateModel}
    \min\limits_{v} \{E(v) = -\sum\limits_
    {\mathclap{\footnotesize \mbox{$\tau=(v_i,v_j,v_k)$}}}
{s_{\tau} \mathbf{N}}_{\tau}\cdot\left(\frac{(v_j-v_i)\times(v_k-v_i)}{\left\|(v_j-v_i)\times(v_k-v_i)\right\|}\right) + \frac{\eta}{2} || v - v^{in} ||^2 \},
\end{equation}
where $(v_i, v_j, v_k)$ are vertices of $\tau$ with counterclockwise order and
$\eta$ is a small positive parameter.
The first term of \cref{eq:vertexUpdateModel} is used to solve the orientation ambiguity problem.
This term not only considers the orthogonality between the triangle face and its corresponding normal direction, but also takes into account the orientation of the face.
Thus, compared to \cref{eq:Sun}, the energy of model \cref{eq:vertexUpdateModel} poses no ambiguity.
The second term of \cref{eq:vertexUpdateModel} is a fidelity term.

The partial derivatives of the energy $E(v)$ with respect to $v_i$ is as follows:

\begin{equation*} 
\small
\begin{aligned}
\nabla_{v_i}E(v) &= \\
-\sum\limits_
{\mathclap{\mbox{\scriptsize
$\begin{array}{c}
\tau \in M_1(v_i),\\
\tau=(v_i,v_j,v_k)\end{array}$}}}
s_{\tau} & \bigg(\frac{\mathbf{N}_{\tau}\times(v_k-v_j)}{\left\|(v_j-v_i)\times(v_k-v_i)\right\|} +
                                    \frac{\mathbf{N}_{\tau}\times(v_k-v_j)}{\left\|(v_k-v_j)\times(v_i-v_j)\right\|} +
                                      \frac{\mathbf{N}_{\tau}\times(v_k-v_j)}{\left\|(v_i-v_k)\times(v_j-v_k)\right\|} \\    &-\frac{\mathbf{N}_{\tau}\cdot\left[(v_j-v_i)\times(v_k-v_j)\right]\left[(v_j-v_i)\times(v_k-v_i)\times(v_k-v_j)\right]}{{\left\|(v_j-v_i)\times(v_k-v_i)\right\|}^3}\\
                                      &-\frac{\mathbf{N}_{\tau}\cdot\left[(v_k-v_j)\times(v_i-v_j)\right]\left[(v_k-v_j)\times(v_i-v_j)\times(v_k-v_j)\right]}{{\left\|(v_k-v_j)\times(v_i-v_j)\right\|}^3}\\
                                      &-\frac{\mathbf{N}_{\tau}\cdot\left[(v_i-v_k)\times(v_j-v_k)\right]\left[(v_i-v_k)\times(v_j-v_k)\times(v_k-v_j)\right]}{{\left\|(v_i-v_k)\times(v_j-v_k)\right\|}^3}\bigg) +\eta(v_i-v_i^{in}).\\
\end{aligned}
\end{equation*}
Using the two facts that
\begin{equation*}
\left\|(v_j-v_i)\times(v_k-v_i)\right\| = \left\|(v_k-v_j)\times(v_i-v_j)\right\| = \left\|(v_i-v_k)\times(v_j-v_k)\right\| = 2\mathcal{S}_{\tau},
\end{equation*}
\begin{equation*}
\frac{(v_j-v_i)\times(v_k-v_i)}{2\mathcal{S}_{\tau}} = \frac{(v_k-v_j)\times(v_i-v_j)}{2\mathcal{S}_{\tau}} = \frac{(v_i-v_k)\times(v_j-v_k)}{2\mathcal{S}_{\tau}} = \mathcal{N}_{\tau},
\end{equation*}
where $\mathcal{S}_{\tau}$ and $\mathcal{N}_{\tau}$ are updating area and normal of triangle $\tau$ according to the updated vertices $v$,
we arrive at
\begin{equation}\label{eq:6}
\nabla_{v_i}E(v) =  \sum\limits_
{\mathclap{\mbox{\scriptsize
$\begin{array}{c}
\tau \in M_1(v_i),\\
\tau=(v_i,v_j,v_k)\end{array}$}}}
\frac{ 3 s_{\tau}((\mathbf{N}_{\tau} \cdot \mathcal{N}_{\tau}) \mathcal{N}_{\tau} - \mathbf{N}_{\tau}) \times (v_{k} - v_{j})}{2\mathcal{S}_{\tau}}+\eta(v_i-v_i^{in}).
\end{equation}


With the given gradient information \cref{eq:6} and the vertex positions of the initial noisy surface, many popular optimization techniques, such as Accelerated Gradient Descent and Quasi-Newton methods, can be used to solve our model \cref{eq:vertexUpdateModel}.
In this paper, we choose the BFGS algorithm \cite{Dennis1977Quasi}, which is one of the most commonly used methods for solving nonconstrained problems like
\cref{eq:vertexUpdateModel}.
In each iteration, BFGS algorithm uses only the energy and gradient evaluated at the current and previous iterations.

\Cref{fig:vertexUpdatingComparisons} demonstrates that our method \cref{eq:vertexUpdateModel} can greatly reduce foldovers compared to the method \cref{eq:Sun} proposed by Sun et al. \cite{Sun:07}.
From the energy evolution curves in \cref{fig:vertexUpdatingComparisons}, we observe that, both methods are convergent and the iteration numbers of these two are close.
However, the results produced by \cite{Sun:07} suffer from severe foldovers (highlighted in red) and are inaccurate, while our method produces much better results without foldovers.

\begin{figure*}[htb]
  \centering
  \includegraphics[width=1.0\linewidth]{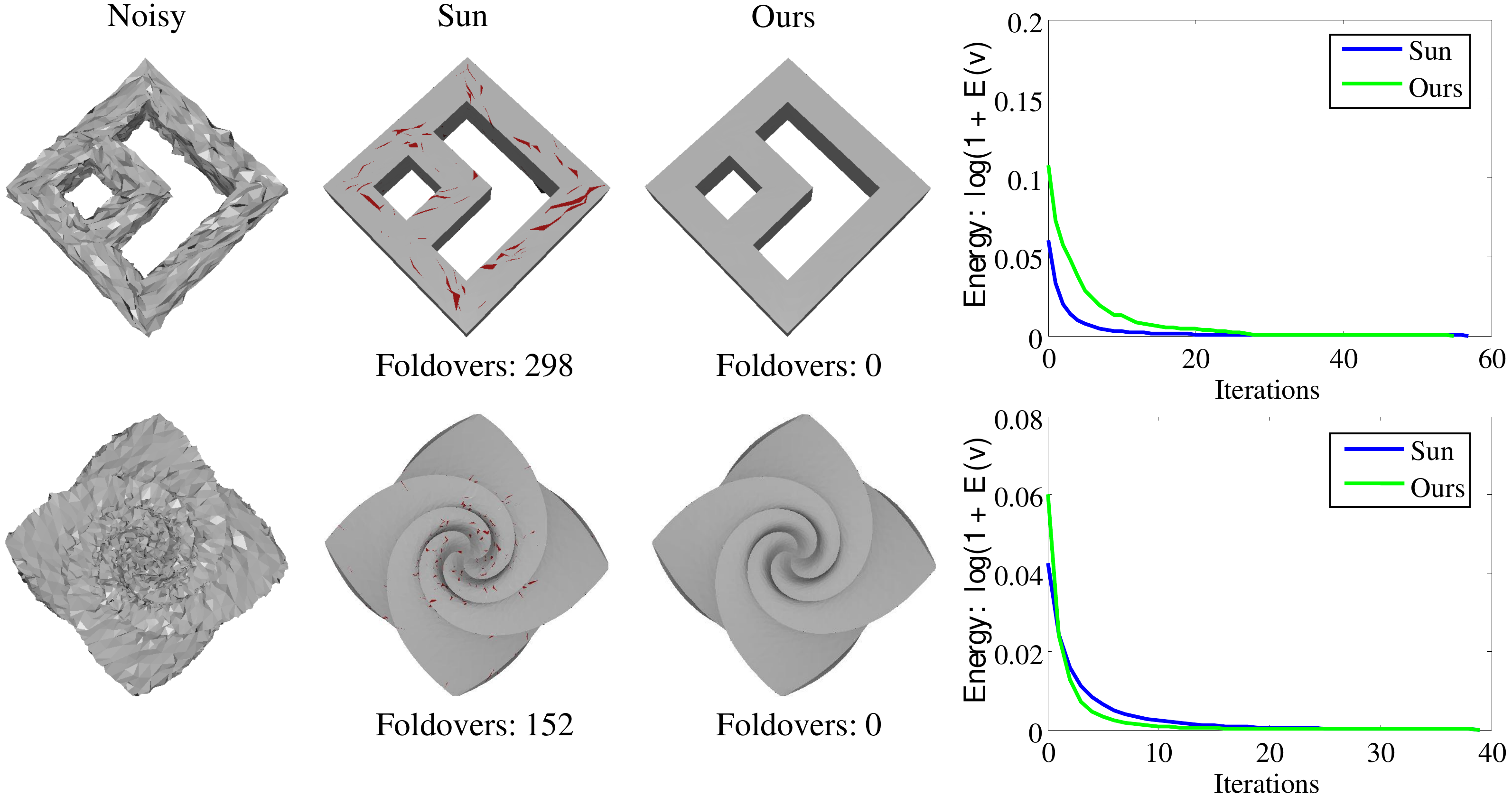}
  \caption{
  Comparisons of vertex updating methods (the method \cref{eq:Sun} proposed by Sun et al. \cite{Sun:07} and ours \cref{eq:vertexUpdateModel}).
  The first column shows the triangulated surfaces corrupted by Gaussian noise with standard deviation $\sigma = 0.4$ mean edge length and $\sigma=0.3$ mean edge length in random directions.
  The second and third columns are results produced by \cite{Sun:07} and ours, respectively.
  The foldovers are highlighted in red.
  The results produced by ours are without foldovers.
  The last column illustrates the energy evolution curves via iteration numbers.
  \label{fig:vertexUpdatingComparisons}}
\end{figure*}

\section{Numerical Experiments} \label{sec:results}
We verify the effectiveness of our two-stage denoising method on a variety of triangulated surfaces with either synthetic or raw noise.
The synthetic noise added in random directions is produced by a zero-mean Gaussian function with standard deviation $\sigma$ proportional to the mean edge length of the clean surface.
The clean surfaces tested in this section are listed in \cref{fig:modelList}.
\begin{figure*}
  \centering
  \includegraphics[width=1.0\linewidth]{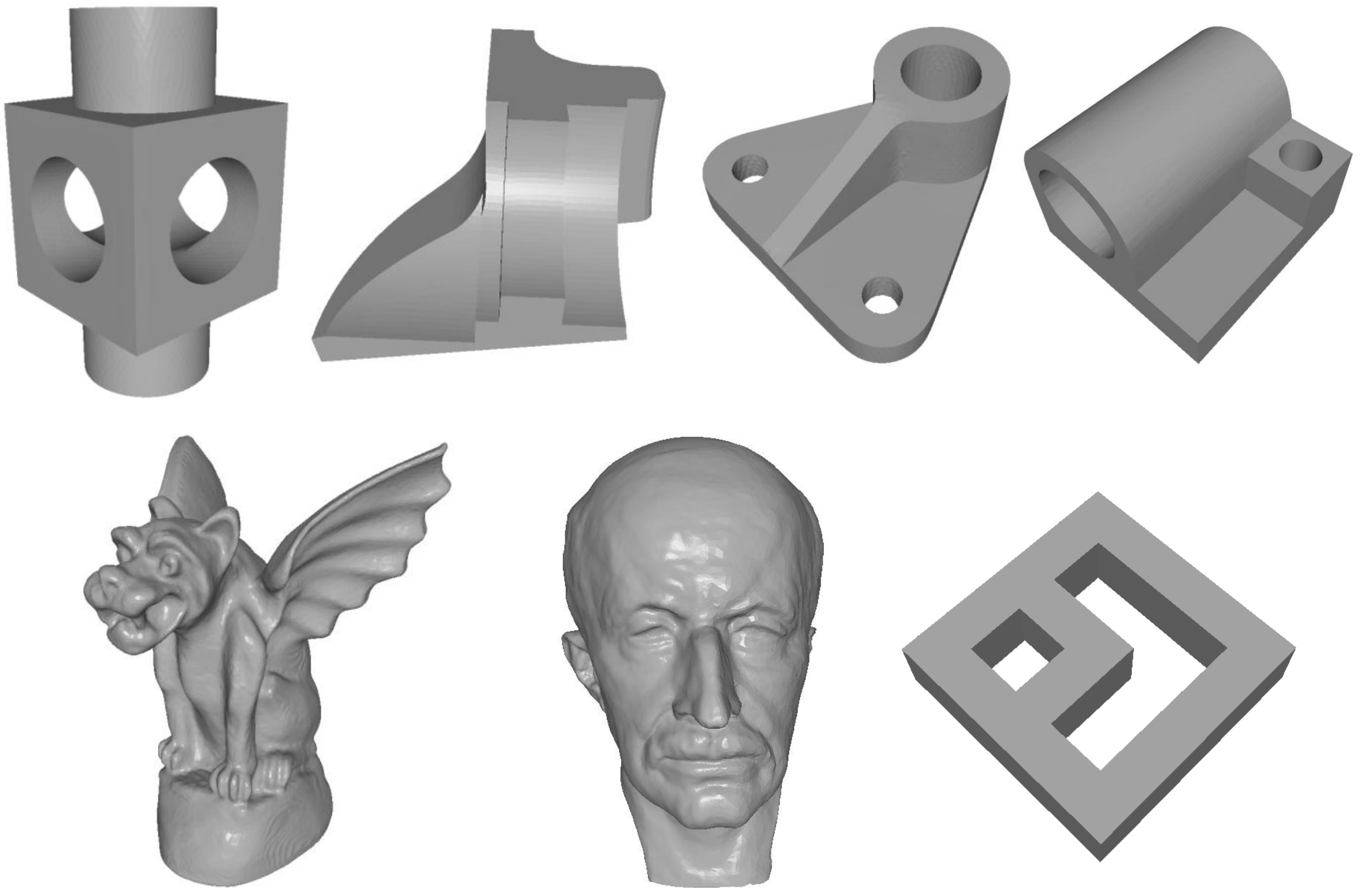}
  \caption{The clean surfaces tested in \cref{sec:results}.
  From top left to bottom right: Block, Fandisk, Part, Joint, Gargoyle, Max-Planck and Doubletorus.
  \label{fig:modelList}}
\end{figure*}

To verify the robustness of our denoising method to quality of surface triangles, we use two quantities as in \cite{liu2017geodesic}.
These quantities are defined as following:
\begin{equation*}
  D^{global} = \frac{\mathrm{min}_{\tau} \  \mathrm{area} \ \mathrm{of} \  \tau}{\mathrm{max}_{\tau} \  \mathrm{area} \  \mathrm{of} \ \tau},
\end{equation*}
\begin{equation*}
  D^{local} = \min_{\tau} \frac{\min_{e\prec\tau} \  \mathrm{length \ of \ } e }{\max_{e\prec\tau} \  \mathrm{length \ of \ } e }.
\end{equation*}
$D^{global}$ stands for the smallest largest triangle area ratio used for globally describing the distribution of triangles.
$D^{local}$ denotes the smallest one of ratios of shortest and longest edge lengths in triangles, which can be used to locally measure the quality of triangles.
The information of the clean surfaces are listed in \cref{tab:information}.
Although several surfaces including Gargoyle, Max-Planck and Embossment are not with very regular meshes as $D^{global}$ and $D^{local}$ indicated in \cref{tab:information},
our method can still effectively handle all these surfaces and produce satisfactory results.

\begin{table*}[htb]
\centering
\caption{Information of surfaces tested in this paper
}
\begin{tabular}{|c|c|c|c|c|}
  \hline
  Surface    & \#vertices & \#triangles & $D^{global}$ & $D^{local}$ \\
  \hline
  Block     & 8771  & 17500  & 0.066312 &  0.369025 \\
  Fandisk    & 6475  & 12946  & 0.0202721 & 0.333103 \\
  Part       & 4261  & 8530   & 0.0596278 & 0.235325 \\
  Joint      & 5636  & 11276  & 0.000489636 & 0.0508141 \\
  Gargoyle   & 25002 & 50000  & 0.000194802 & 0.0814815 \\
  Max-Planck & 30942 & 61880  & 9.28547e-006 & 0.0135542 \\
  Rabbit    & 37394 & 73679   & 0.0805432 & 0.0991931 \\
  Angel     & 24566 & 72690  & 0.041708 & 0.0824533 \\
  Shell    & 58354 & 174031  & 0.00645787 & 0.106432 \\
  Embossment     & 65988  & 195095   & 0.00735766 & 0.0106189 \\
  Doubletorus    & 2686  & 5376   & 0.00439037 & 0.109461 \\
  \hline
\end{tabular}
\label{tab:information}
\end{table*}

For fair comparisons, we have implemented all the algorithms tested in this paper using C++ and run all examples on a notebook with a Intel dual core 2.10 GHz processor and a 8GB RAM.
All the surfaces are rendered in flat-shading model to show faceting effect.
Our algorithm is compared qualitatively and quantitatively to state-of-art methods, respectively.
We also discuss our algorithm from various aspects, including influences of parameters and algorithm convergence.

\subsection{Qualitative Comparisons}
In this subsection, we compare our surface denoising method w-HO with other methods including TV normal filtering method \cite{Zhang:15},  $\ell_{0}$ minimization \cite{He13} and bilateral weighting Laplacian optimization \cite{Zheng:11}, abbreviated as TV, $\ell_{0}$ and bw-Laplacian respectively.
For all these methods, we carefully tuned the parameters to get the visually best denoising results.

In \cref{fig:cadComparison}, we compare the results for surfaces containing both sharp features (including sharp edges and corners) and smooth regions (including smoothly curved regions and flat regions).
As we can see, bw-Laplacian keeps smooth regions well but blurs sharp features, while our w-HO method, TV and $\ell_0$ preserve most sharp features well.
Furthermore, TV and $\ell_0$ both suffer from staircase effects in smoothly curved regions indicated in \cref{fig:cadComparison}(c) and (d), and this phenomenon is extremely serious for $\ell_0$ which produces false edges in the first and last row of (d) of \cref{fig:cadComparison}.
However, our w-HO method does not produce the staircase effect while preserving sharp features well.
As we know, both sharp features and noise belong to high frequency information.
The bw-Laplacian cannot distinguish them strictly, especially for small scale features.
Thus, it may treat some features as noise and blur them.
In addition, as stated in compressed sensing, both $\ell_0$  norm and $\ell_1$ norm have sparse property, which can be used for preserving sharp features.
However, as $\ell_0$ and TV use low order information of surfaces, they tend to produce staircase effects in smooth regions, especially for $\ell_0$ for its high sparsity requirement.
Consequently, the compared three methods can either deal with smooth regions or sharp features well.
In contrast, our w-HO method can suppress the staircase effects in smooth regions and simultaneously preserve sharp features.
In all, for CAD-like surfaces, visual comparisons in \cref{fig:cadComparison} show that our w-HO method
is noticeably better than all the other three methods in terms of smooth regions and sharp features recovery.


\begin{figure}[tbhp]
  \centering
  \subfloat[Noisy]{\label{fig:cadComparison-a}\includegraphics[width=0.2\textwidth]{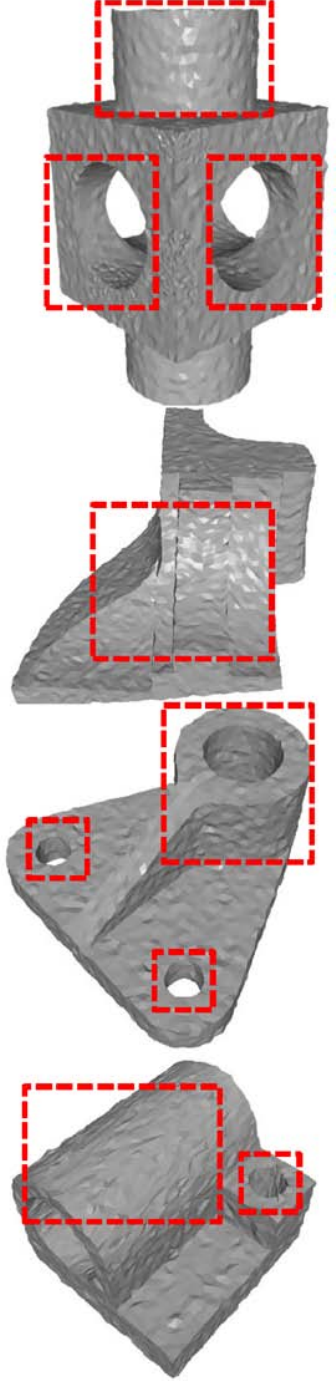}}
  \subfloat[w-HO]{\label{fig:cadComparison-b}\includegraphics[width=0.2\textwidth]{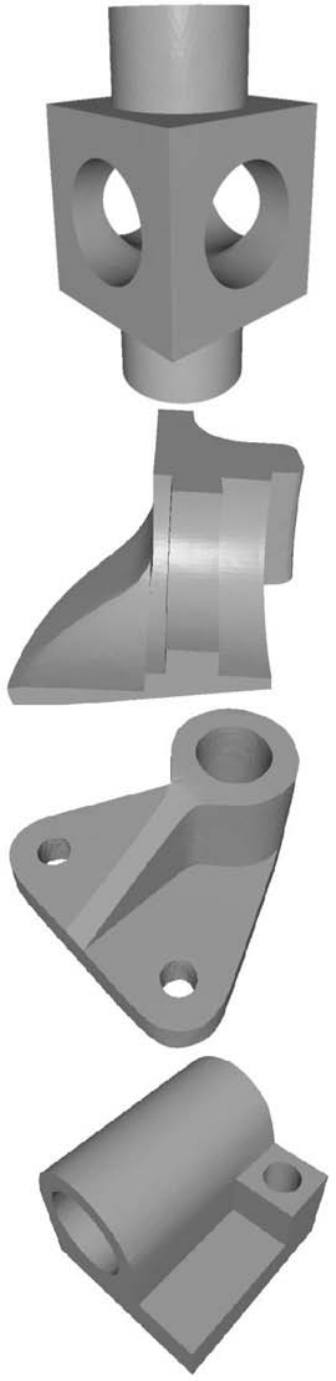}}
  \subfloat[TV]{\label{fig:cadComparison-c}\includegraphics[width=0.2\textwidth]{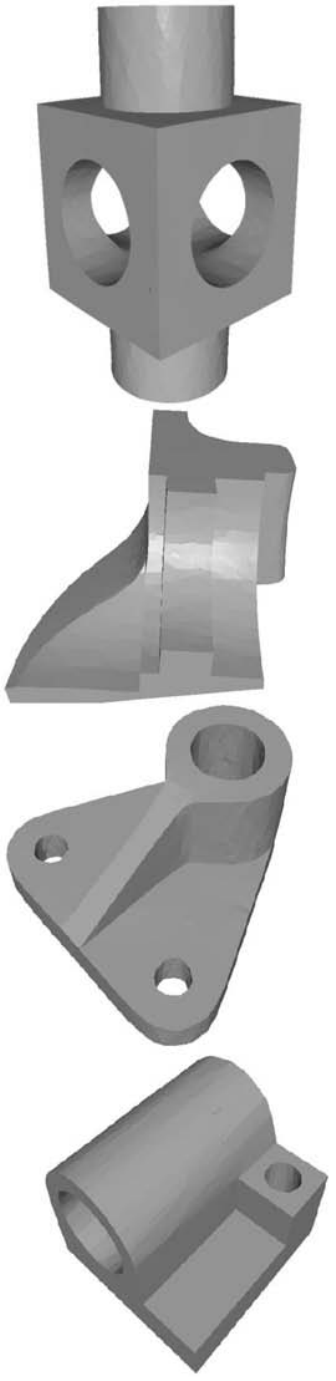}}
  \subfloat[$\ell_0$]{\label{fig:cadComparison-d}\includegraphics[width=0.2\textwidth]{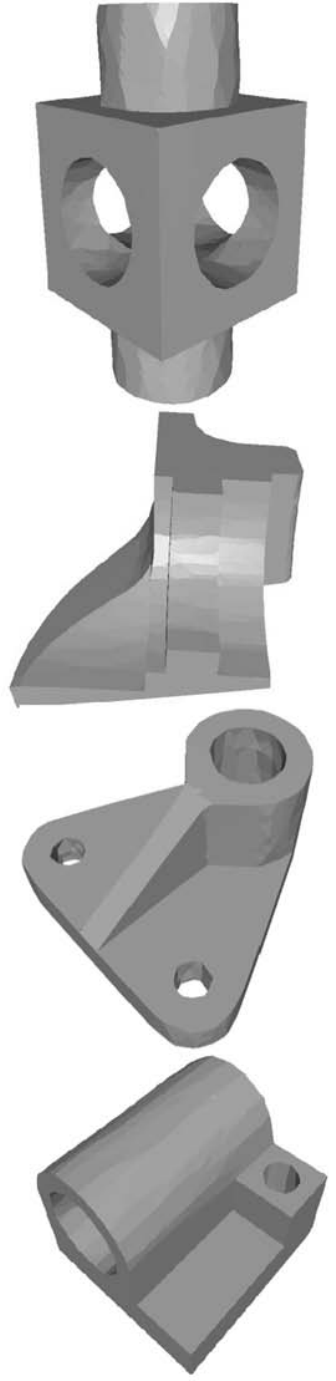}}
  \subfloat[bw-Laplacian]{\label{fig:cadComparison-e}\includegraphics[width=0.2\textwidth]{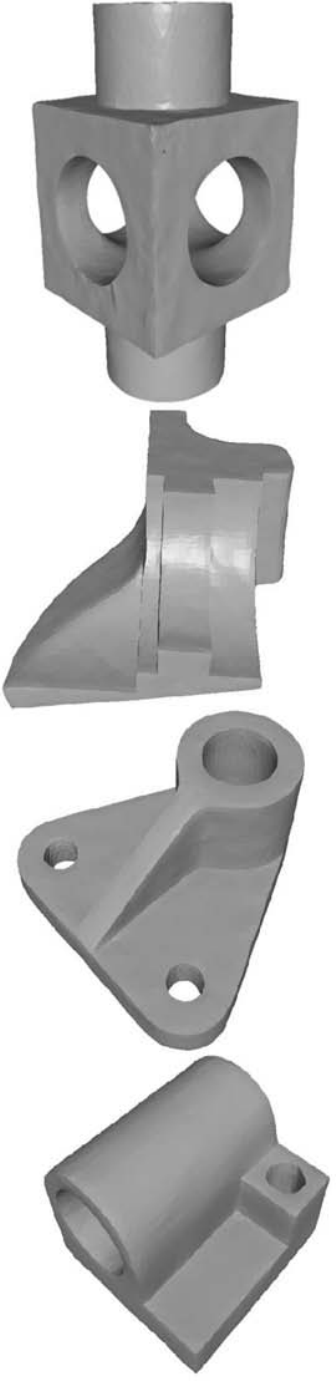}}
  \caption{Denoising results of Block, Fandisk, Part and Joint (corrupted by Gaussian noise, standard deviation = 0.15 mean edge length).
  From left to right: input noisy surfaces, denoising results produced by our proposed w-HO method, TV method \cite{Zhang:15}, $\ell_{0}$ minimization \cite{He13} and bilateral weighting Laplacian method \cite{Zheng:11}, respectively.}
  \label{fig:cadComparison}
\end{figure}


\Cref{fig:noncadComparison} shows results of surfaces with fine features.
As can be seen, TV and $\ell_0$ tend to flatten some details, and $\ell_0$ performs even worse.
Our w-HO method and bw-Laplacian can both generate visually better denoising results.
However, from numerical metrics (which will be introduced in \cref{sec:6.2}), we observe that errors of our method are always lower than those of bw-Laplacian.
This demonstrates that our method is better than bw-laplacian.
In general, for non-CAD surfaces, our w-HO method can also yield satisfactory results containing more details than other methods.

\begin{figure}[tbhp]
  \centering
  \subfloat[Noisy]{\label{fig:noncadComparison-a}\includegraphics[width=0.2\textwidth]{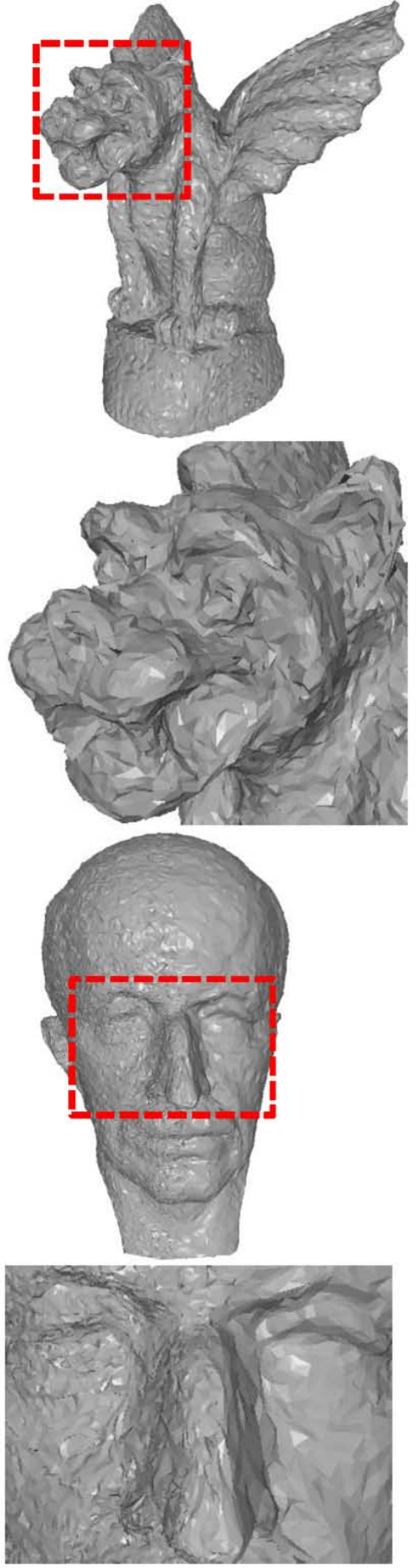}}
  \subfloat[w-HO]{\label{fig:noncadComparison-b}\includegraphics[width=0.2\textwidth]{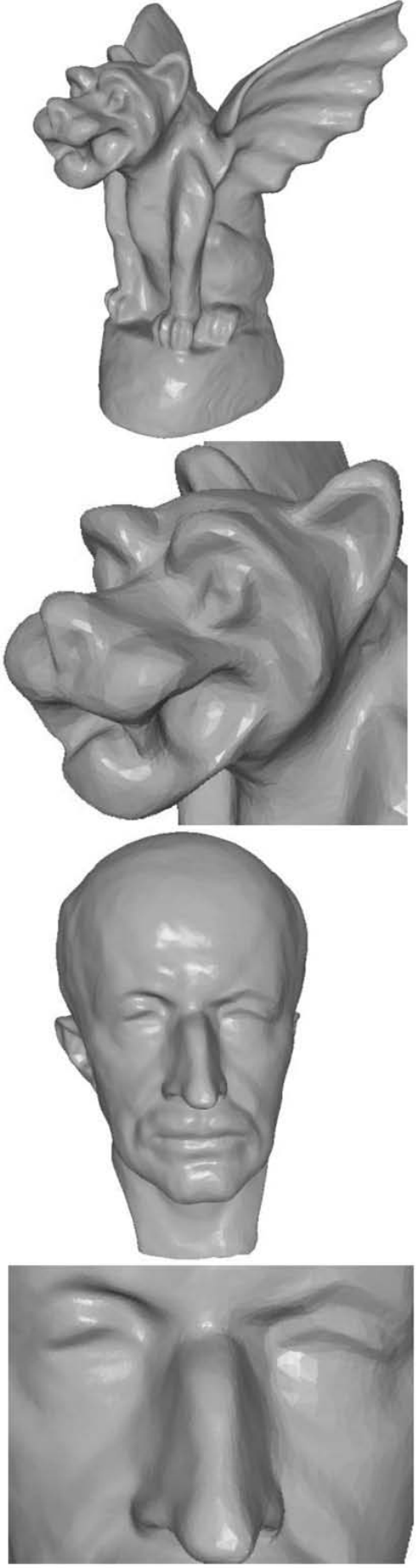}}
  \subfloat[TV]{\label{fig:noncadComparison-c}\includegraphics[width=0.2\textwidth]{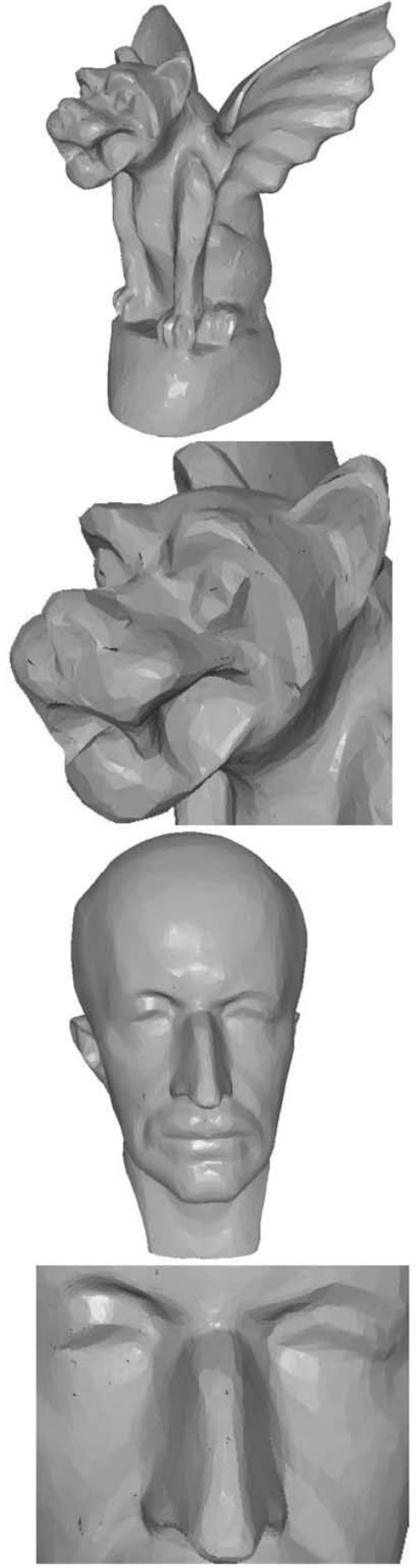}}
  \subfloat[$\ell_0$]{\label{fig:noncadComparison-d}\includegraphics[width=0.2\textwidth]{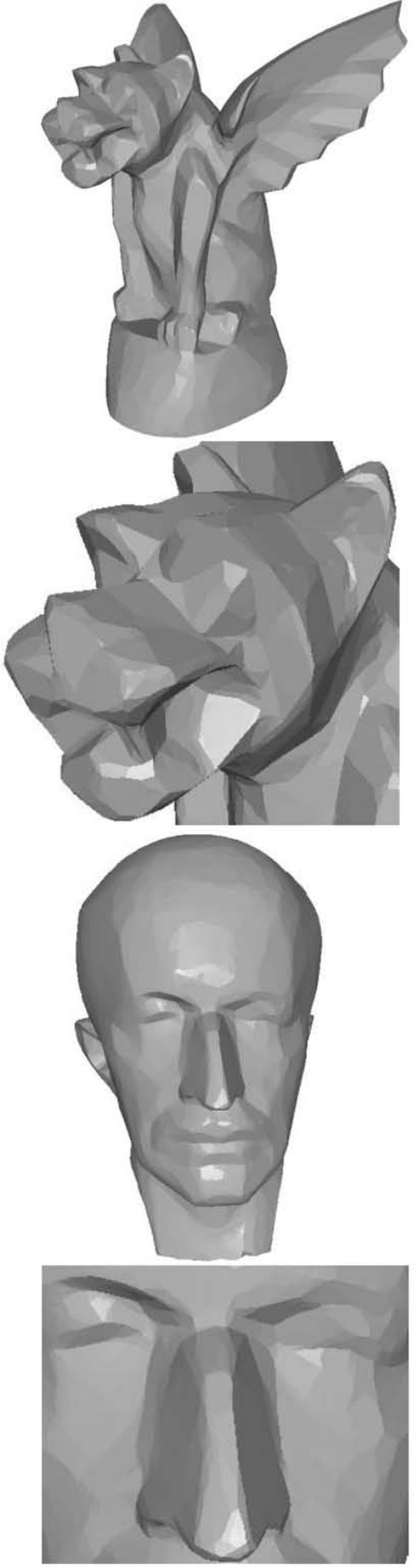}}
  \subfloat[bw-Laplacian]{\label{fig:noncadComparison-e}\includegraphics[width=0.2\textwidth]{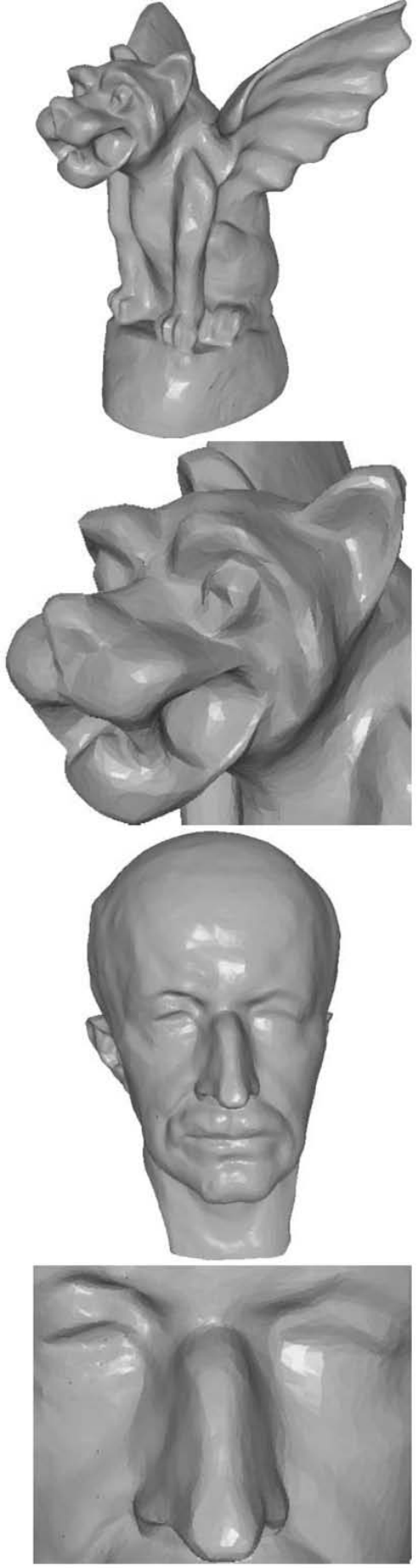}}
  \caption{Denoising results of Gargoyle (corrupted by Gaussian noise, standard deviation = 0.25 mean edge length) and Max-Planck (corrupted by Gaussian noise, standard deviation = 0.2 mean edge length).
  From left to right: input noisy surfaces, denoising results produced by our proposed w-HO method, TV method \cite{Zhang:15}, $\ell_{0}$ minimization \cite{He13} and bilateral weighting Laplacian method \cite{Zheng:11}, respectively.
  The second and fourth rows show magnified views of Gargoyle and Max-Planck.}
  \label{fig:noncadComparison}
\end{figure}


To further demonstrate the validity of our w-HO method, we test it on real scanned surfaces; see \cref{fig:scandata}.
We can see that, our method can yield very good denoising results preserving most features well.

\begin{figure*}
  \centering
  \includegraphics[width=1.0\linewidth]{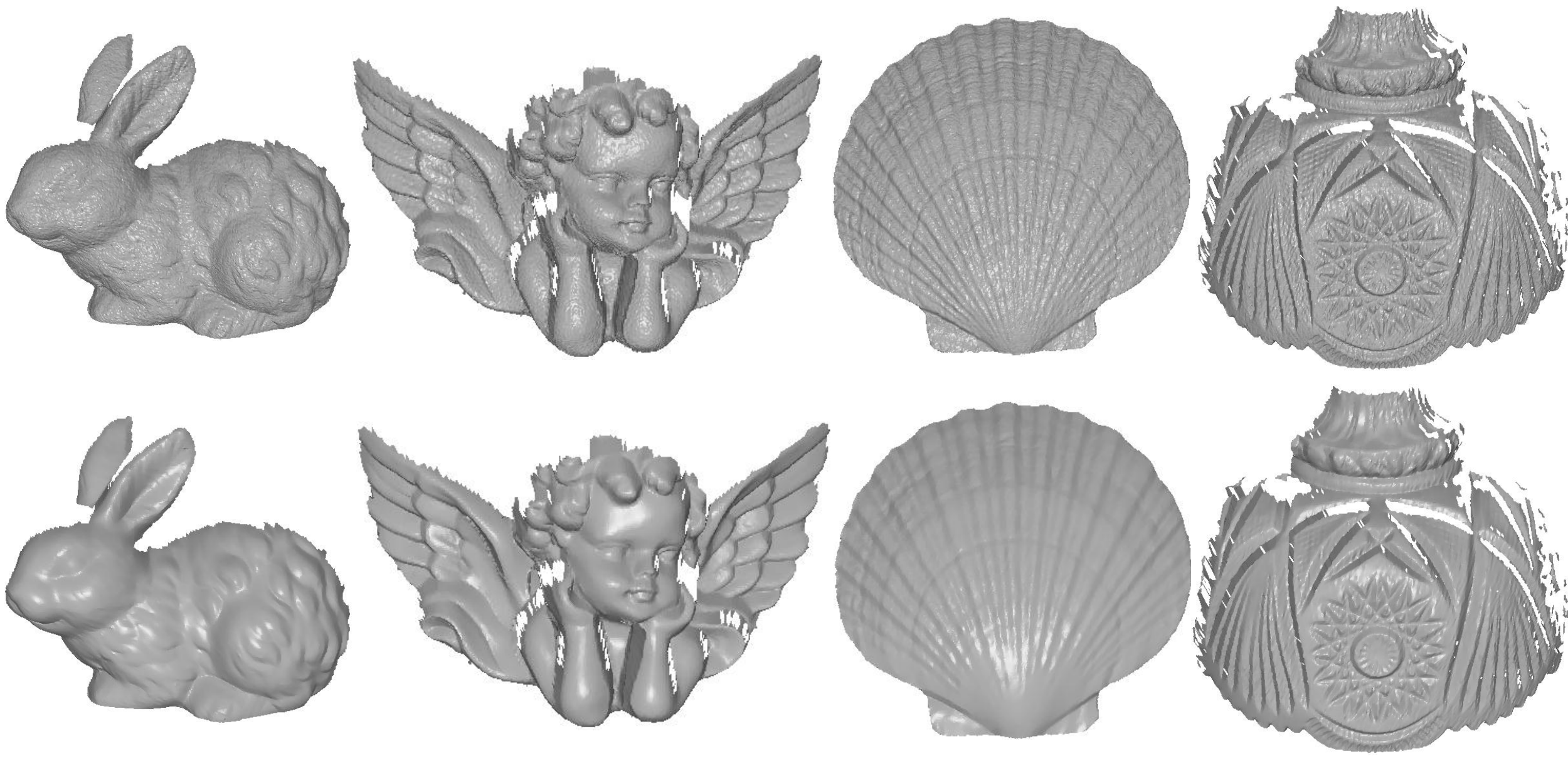}
  \caption{Our denoising results for four real scanned surfaces. From left to right: Rabbit, Angel, Shell and Embossment.
  \label{fig:scandata}}
\end{figure*}

\subsection{Quantitative Comparisons} \label{sec:6.2}
\begin{table}
\centering
\caption{Quantitative evaluation results of \cref{fig:cadComparison,fig:noncadComparison} for our proposed w-HO method, TV method \cite{Zhang:15}, $\ell_{0}$ minimization \cite{He13} and bilateral weighting Laplacian method \cite{Zheng:11}.
$\sigma$ is the standard deviation of the Gaussian noise added to the clean surface.} \label{tab:HOTVAndTVComparing}
\begin{tabular}{|c|c|c|c|c|c|}
  \hline
  \multicolumn{1}{|c|}{\multirow{2}{*}{Models}} & \multicolumn{1}{|c|}{\multirow{2}{*}{$\sigma$}} &  \multicolumn{1}{|c|}{\multirow{2}{*}{Methods}} & \multicolumn{1}{|c|}{$\mathrm{MSAE}$}   &
  \multicolumn{1}{|c|}{$\mathcal{E}_{v,2}$} &
  \multicolumn{1}{|c|}{CPU costs}
  \\

  \multicolumn{1}{|c|}{}                  &  \multicolumn{1}{|c|}{}          & \multicolumn{1}{|c|}{}     & \multicolumn{1}{|c|}{$(\times 10^{-3})$} & \multicolumn{1}{|c|}{$(\times 10^{-3})$} &
  \multicolumn{1}{|c|}{(in seconds)}
  \\
  \hline

    \multicolumn{1}{|c|}{\multirow{4}{*}{Block}}  & \multicolumn{1}{|c|}{\multirow{4}{*}{0.15}} &
    \multicolumn{1}{|c|}{w-HO} & $\mathbf{2.40}$  & $\mathbf{0.79}$  & 7.35          \\
     \multicolumn{1}{|c|}{}  &                    & TV         & $3.61$  & $0.98$ &2.13             \\
    \multicolumn{1}{|c|}{}  &                     & $\ell_0$      & $4.31$  & $1.81$ &16.57            \\
    \multicolumn{1}{|c|}{}  &                     & bw-Laplacian  & $5.66$  & $1.01$ &1.45          \\
    \cline{1-6}

  \multicolumn{1}{|c|}{\multirow{4}{*}{Fandisk}}  & \multicolumn{1}{|c|}{\multirow{4}{*}{0.15}}
  & \multicolumn{1}{|c|}{w-HO}                                  & $\mathbf{1.48}$  & $0.88$ &1.68            \\
   \multicolumn{1}{|c|}{}  &                      & TV          & $1.53$  & $\mathbf{0.86}$ &0.84         \\
   \multicolumn{1}{|c|}{}  &                      & $\ell_0$       & $3.85$  & $1.12$          &7.53 \\
   \multicolumn{1}{|c|}{}  &                      & bw-Laplacian   & $2.50$  & $1.01$          &0.76   \\
  \cline{1-6}

  \multicolumn{1}{|c|}{\multirow{4}{*}{Part}}  & \multicolumn{1}{|c|}{\multirow{4}{*}{0.15}}
  & \multicolumn{1}{|c|}{w-HO}                                  & $\mathbf{1.29}$  & $\mathbf{0.94}$ & 1.84             \\
   \multicolumn{1}{|c|}{}  &                      & TV          & $2.51$  & $1.22$ & 0.67         \\
   \multicolumn{1}{|c|}{}  &                      & $\ell_0$       & $8.1$   & $2.32$ & 8.01         \\
   \multicolumn{1}{|c|}{}  &                      & bw-Laplacian   & $4.22$  & $1.23$ & 0.53             \\
  \cline{1-6}

  \multicolumn{1}{|c|}{\multirow{4}{*}{Joint}}  & \multicolumn{1}{|c|}{\multirow{4}{*}{0.15}}
  & \multicolumn{1}{|c|}{w-HO}                                  & $\mathbf{2.88}$  &$\mathbf{0.76}$ &2.80             \\
   \multicolumn{1}{|c|}{}  &                      & TV          & $4.21$  & $1.20$ & 1.98        \\
   \multicolumn{1}{|c|}{}  &                      & $\ell_0$       & $12.7$  & $2.37$ & 12.84         \\
   \multicolumn{1}{|c|}{}  &                      & bw-Laplacian   & $6.77$  & $2.01$ & 0.81            \\
  \cline{1-6}

  \multicolumn{1}{|c|}{\multirow{4}{*}{Bunny}}  & \multicolumn{1}{|c|}{\multirow{4}{*}{0.2}} & \multicolumn{1}{|c|}{w-HO} & $\mathbf{12.9}$  & $0.92$  &18.78        \\
   \multicolumn{1}{|c|}{}  &                      & TV      & $16.1$  & $0.89$ &8.57          \\
   \multicolumn{1}{|c|}{}  &                      & $\ell_0$   & $27.5$  & $2.16$ &60.72          \\
   \multicolumn{1}{|c|}{}  &                      & bw-Laplacian   & $13.4$  & $\mathbf{0.88}$ &7.89             \\
  \cline{1-6}

  \multicolumn{1}{|c|}{\multirow{4}{*}{Gargoyle}}  & \multicolumn{1}{|c|}{\multirow{4}{*}{0.25}} & \multicolumn{1}{|c|}{w-HO} & $\mathbf{13.5}$  & $0.77$ &23.32        \\
  \multicolumn{1}{|c|}{}  &                      & TV   & $17.3$  & $0.83$ &12.34           \\
   \multicolumn{1}{|c|}{}  &                      & $\ell_0$   & $31.0$  & $1.79$ &57.36            \\
   \multicolumn{1}{|c|}{}  &                      & bw-Laplacian   & $16.1$  & $\mathbf{0.75}$ &5.12              \\
  \cline{1-6}

    \multicolumn{1}{|c|}{\multirow{4}{*}{Max-Planck}}  & \multicolumn{1}{|c|}{\multirow{4}{*}{0.2}} & \multicolumn{1}{|c|}{w-HO} & $\mathbf{10.8}$  & $\mathbf{0.85}$ &31.81        \\
  \multicolumn{1}{|c|}{}  &                       & TV          & $16.6$  & $1.11$ &13.41           \\
   \multicolumn{1}{|c|}{}  &                      & $\ell_0$    & $33.5$  & $1.86$ &74.05            \\
   \multicolumn{1}{|c|}{}  &                      & bw-Laplacian   & $12.1$  & $0.91$ &9.41              \\
  \cline{1-6}

\end{tabular}
\end{table}

From the above comparisons, we find that our w-HO method generates visually better results than those compared methods.
In this subsection, we further compare them quantitatively.

We use two error metrics \cite{Sun:07,Sun:08,Zheng:11} to measure the deviation of the denoised surface from the clean one, which are defined as followed:

\begin{itemize}
  \item Mean square angular error (MSAE):
  \begin{equation*}
    \mathrm{MSAE} = \mathrm{average}({\angle(\mathbf{N}^{'},\mathbf{N})}),
  \end{equation*}
  where $\angle(\mathbf{N}^{'},\mathbf{N})$ is the square angle between the normal of the denoising result and the clean surface,  $\mathrm{average}({\angle(\mathbf{N}^{'},\mathbf{N})})$ is the square angle averaged over all faces.

  \item $L_2$ vertex-based surface-to-surface error:
  \begin{equation*}
    \mathcal{E}_{v,2} = \sqrt{\frac{1}{3\sum\limits_{\tau}s_\tau} \sum\limits^{\mathrm{V}-1}_{i=0}(\sum\limits_{M_1(v_i)}s_\tau) \mathrm{dist}(v^{,}_i,M)^2},
  \end{equation*}
  where $\mathrm{dist}(v^{,}_i,M)$ is the distance between the updated vertex $v^{'}_{i}$ and a triangle of the clean surface $M$ which is closest to $v^{'}_{i}$.

\end{itemize}

Then, we compare our w-HO method to other three methods using the above two error metrics for the examples shown in \cref{fig:cadComparison,fig:noncadComparison}.
The evaluation results are listed in \cref{tab:HOTVAndTVComparing}.
As can be seen, our w-HO method outperforms the other methods in the sense that angular errors (MSAE) from w-HO
are significantly smaller than all the other methods, especially for CAD-like surfaces.
It is also observed that, the results of w-HO have the least $L_2$ vertex-based errors ($\mathcal{E}_{v,2}$) in most cases.
This demonstrates that the results produced by w-HO are more faithful to the ground truth surfaces.

The CPU costs of all the tested methods are recorded in the last column of \cref{tab:HOTVAndTVComparing}.
For our w-HO method, the most time-consuming part is solving the $\textbf{N}$-sub problem.
As mentioned in \cref{sec:alm}, due to the error forgetting property \cite{Yin2013Error} of our ALM algorithm, we use a fast approximate strategy to solve this subproblem.
As can be seen, bilateral weighting Laplacian method \cite{Zheng:11} is the fastest method, while $\ell_0$ minimization \cite{He13} is the slowest.
Although our w-HO method is a little more computationally intensive than TV method \cite{Zhang:15}, the CPU cost is still acceptable.
In the future, we will investigate how to accelerate our w-HO method.


\subsection{Influence of Parameters}
To our knowledge, most triangulated surface denoising methods have parameters, which need to be manually tuned.
\cref{alg:normalFiltering} also has two parameters, i.e., $\alpha$ and $r_\mathbf{p}$.
These two parameters need to be tuned for producing prominent results.
The first parameter is used to balance the fidelity and regularization term of the normal filtering model \cref{InitHOTVmodel}.
The second one is introduced by the augmented Lagrangian method.

$\alpha$ is used to control the degree of denoising and smoothness of the result surface.
\cref{fig:parameterAlpha} illustrates results of different $\alpha$ with fixed $r_\mathbf{p}$.
As can be seen, if $\alpha$ is too large, noise cannot be effectively removed indicated in \cref{fig:parameterAlpha}(b); and if $\alpha$ is too small, surfaces will be over-smoothed and fine features will be lost illustrated in \cref{fig:parameterAlpha}(e).
For each noisy surface, there exist a range of $\alpha$ for  \cref{alg:normalFiltering} producing visually well denoising results; see \cref{fig:parameterAlpha}(c) and (d).

\begin{figure}[tbhp]
  \centering
  \subfloat[Noisy]{\label{fig:parameterAlpha-a}\includegraphics[width=0.2\textwidth]{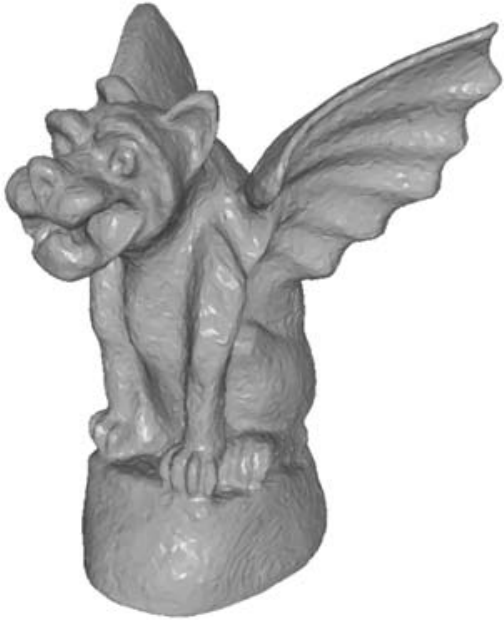}}
  \subfloat[$\alpha=3000$]{\label{fig:parameterAlpha-b}\includegraphics[width=0.2\textwidth]{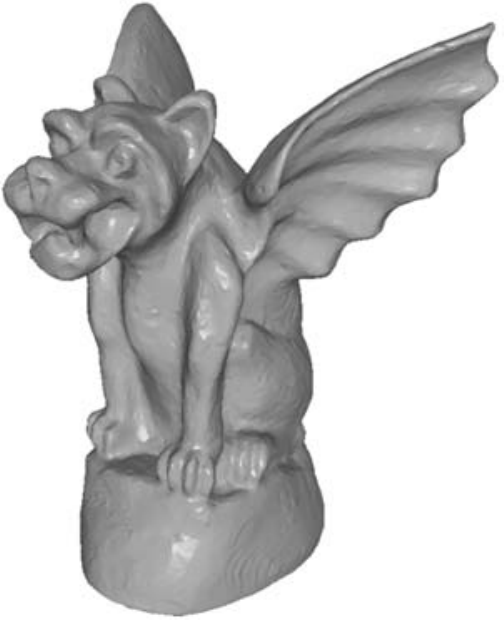}}
  \subfloat[$\alpha=1500$]{\label{fig:parameterAlpha-c}\includegraphics[width=0.2\textwidth]{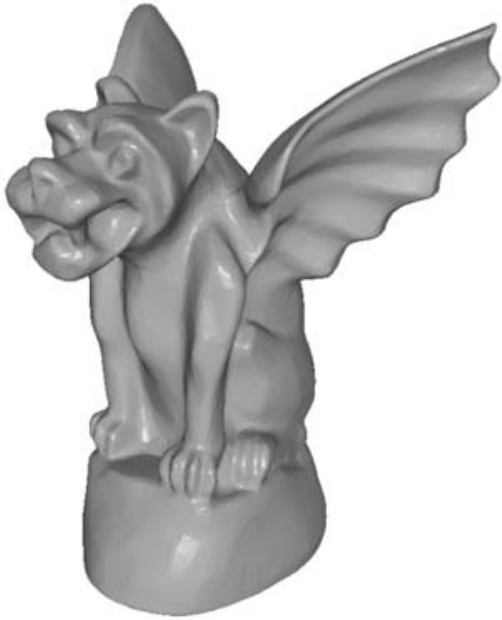}}
  \subfloat[$\alpha=600$]{\label{fig:parameterAlpha-d}\includegraphics[width=0.2\textwidth]{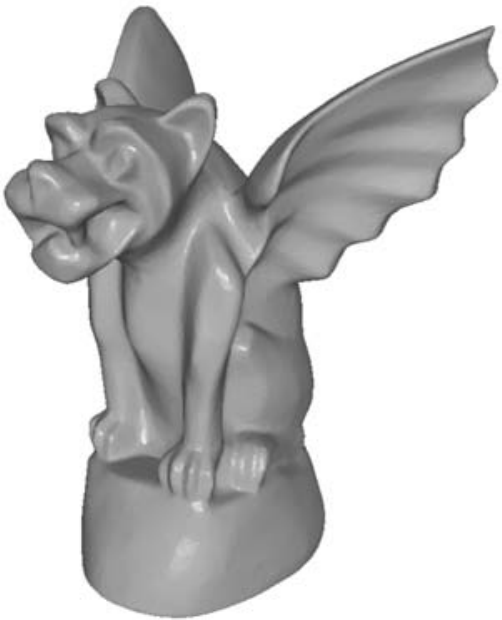}}
  \subfloat[$\alpha=200$]{\label{fig:parameterAlpha-e}\includegraphics[width=0.2\textwidth]{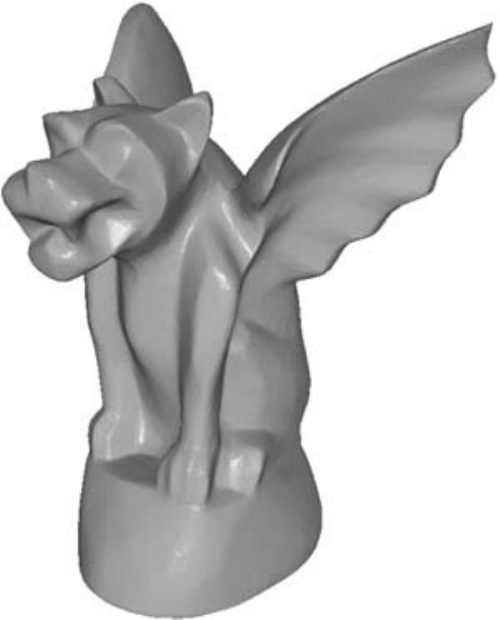}}
  \caption{Denoising results for $\alpha$ with fixed $r_\mathbf{p}$. From left to right: input noisy surface (corrupted by Gaussian noise, standard deviation = 0.1 mean edge length) and results with different $\alpha$.}
  \label{fig:parameterAlpha}
\end{figure}


$r_\mathbf{p}$ also has influence on denoising results. \cref{fig:parameterR} shows results of different $r_\mathbf{p}$ with fixed $\alpha$.
As we can see, too small $r_\mathbf{p}$ will left some noise on the surface indicated in \cref{fig:parameterR}(b), and too large $r_\mathbf{p}$ should over-smooth the result illustrated in \cref{fig:parameterR}(e).
Again, for each noisy surface, there exist a range of $r_\mathbf{p}$ for our algorithm producing visually well results as shown in \cref{fig:parameterR}(c) and (d).

\begin{figure}[tbhp]
  \centering
  \subfloat[Noisy]{\label{fig:parameterR-a}\includegraphics[width=0.2\textwidth]{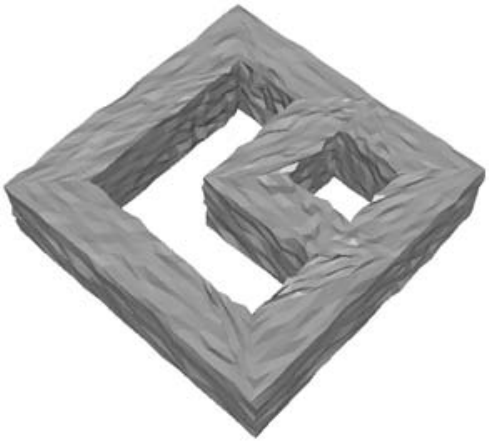}}
  \subfloat[$r_\mathbf{p}=0.1$]{\label{fig:parameterR-b}\includegraphics[width=0.2\textwidth]{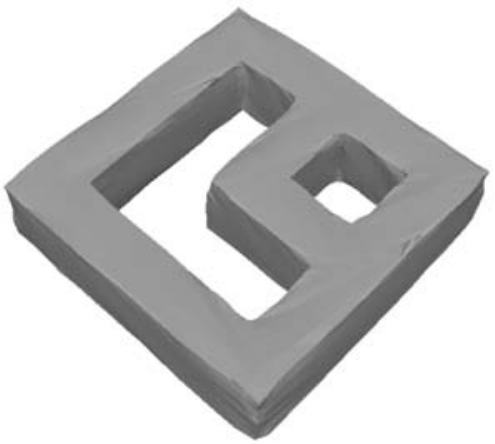}}
  \subfloat[$r_\mathbf{p}=1$]{\label{fig:parameterR-c}\includegraphics[width=0.2\textwidth]{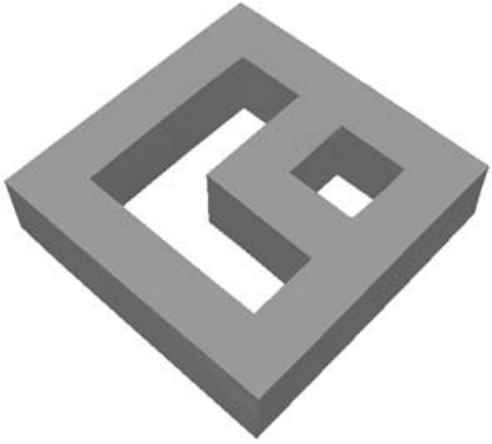}}
  \subfloat[$r_\mathbf{p}=5$]{\label{fig:parameterR-d}\includegraphics[width=0.2\textwidth]{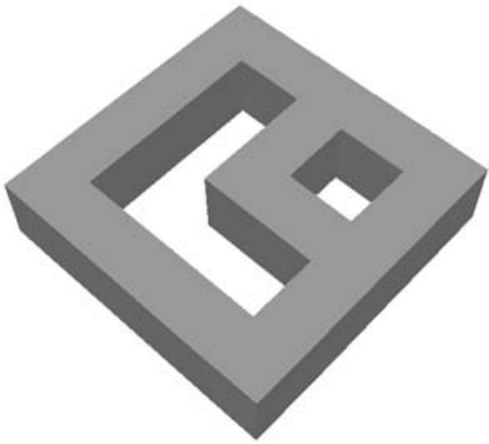}}
  \subfloat[$r_\mathbf{p}=100$]{\label{fig:parameterR-e}\includegraphics[width=0.2\textwidth]{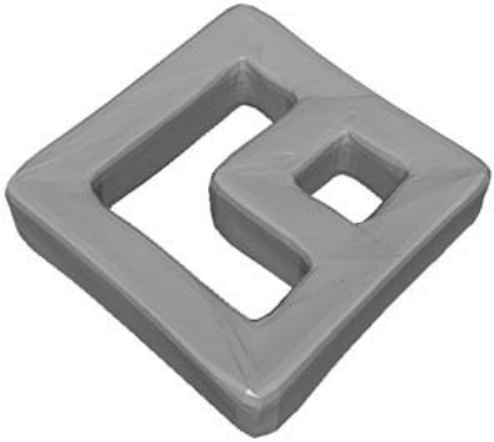}}
  \caption{Denoising results for $r_\mathbf{p}$ with fixed $\alpha$. From left to right: input noisy surface (corrupted by Gaussian noise, standard deviation = 0.15 mean edge length) and results with different $r_\mathbf{p}$.}
  \label{fig:parameterR}
\end{figure}


\subsection{Algorithm Convergence and Effect of Dynamic Weights}
Due to nonlinear and nonconvex constraints of the proposed high order normal filtering model \cref{InitHOTVmodel},
it is a challenge to have the convergence analysis of \cref{alg:normalFiltering}.
However, we can verify the convergence using numerical experiments.
From the energy evolution in \cref{fig:energyCurve}, we observe that, the energy always decrease in each iteration.
This verifies the numerical convergence of \cref{alg:normalFiltering}.

\begin{figure}
 \captionsetup[subfigure]{justification=centering}
  \centering
  \begin{tabular}{c@{\hspace{3mm}}c@{\hspace{3mm}}}
  \subfloat[][]
  {\label{fig:curve-a} \includegraphics[width=0.42\textwidth]{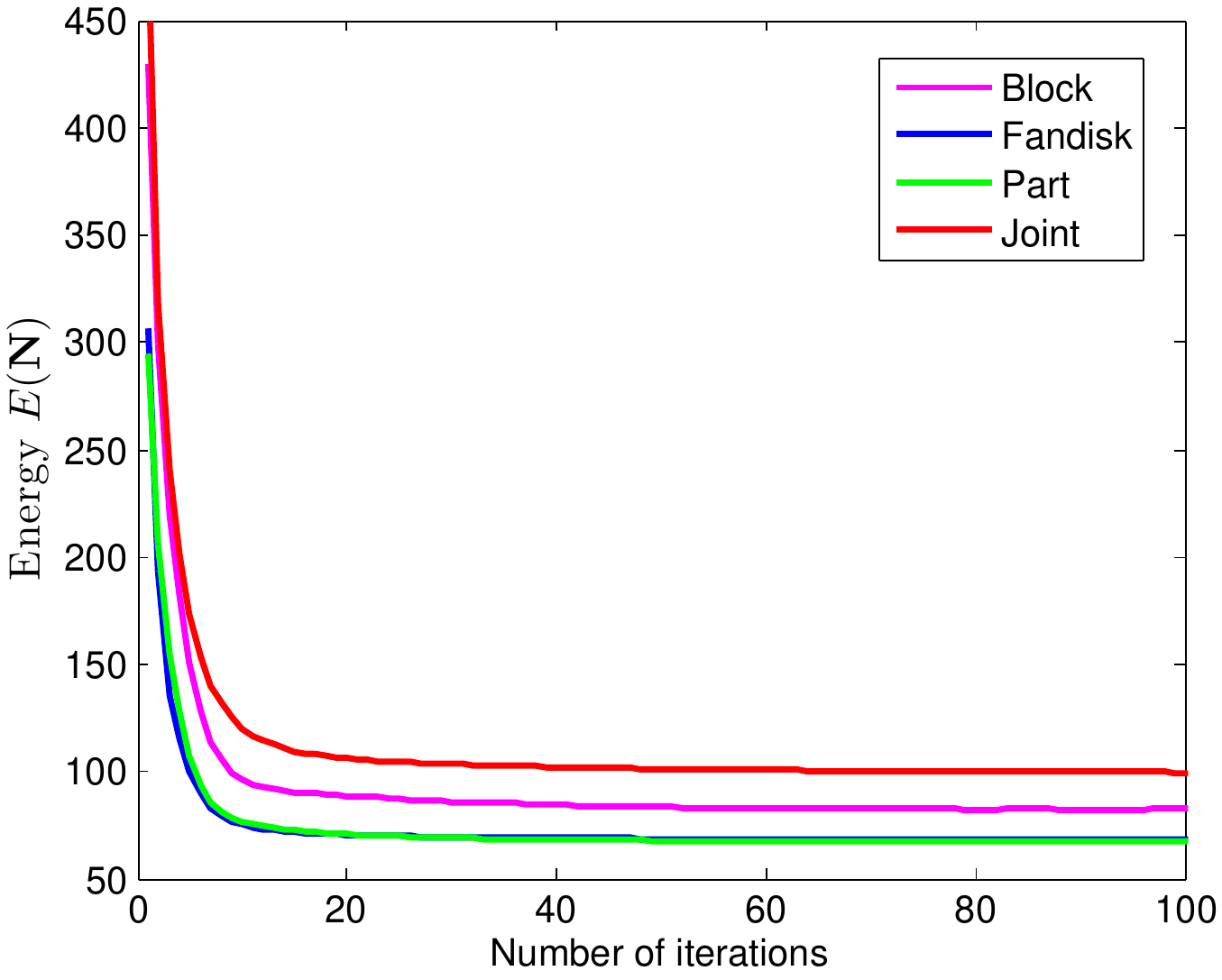}}&
  \subfloat[][]
  {\label{fig:curve-b}\includegraphics[width=0.42\textwidth]{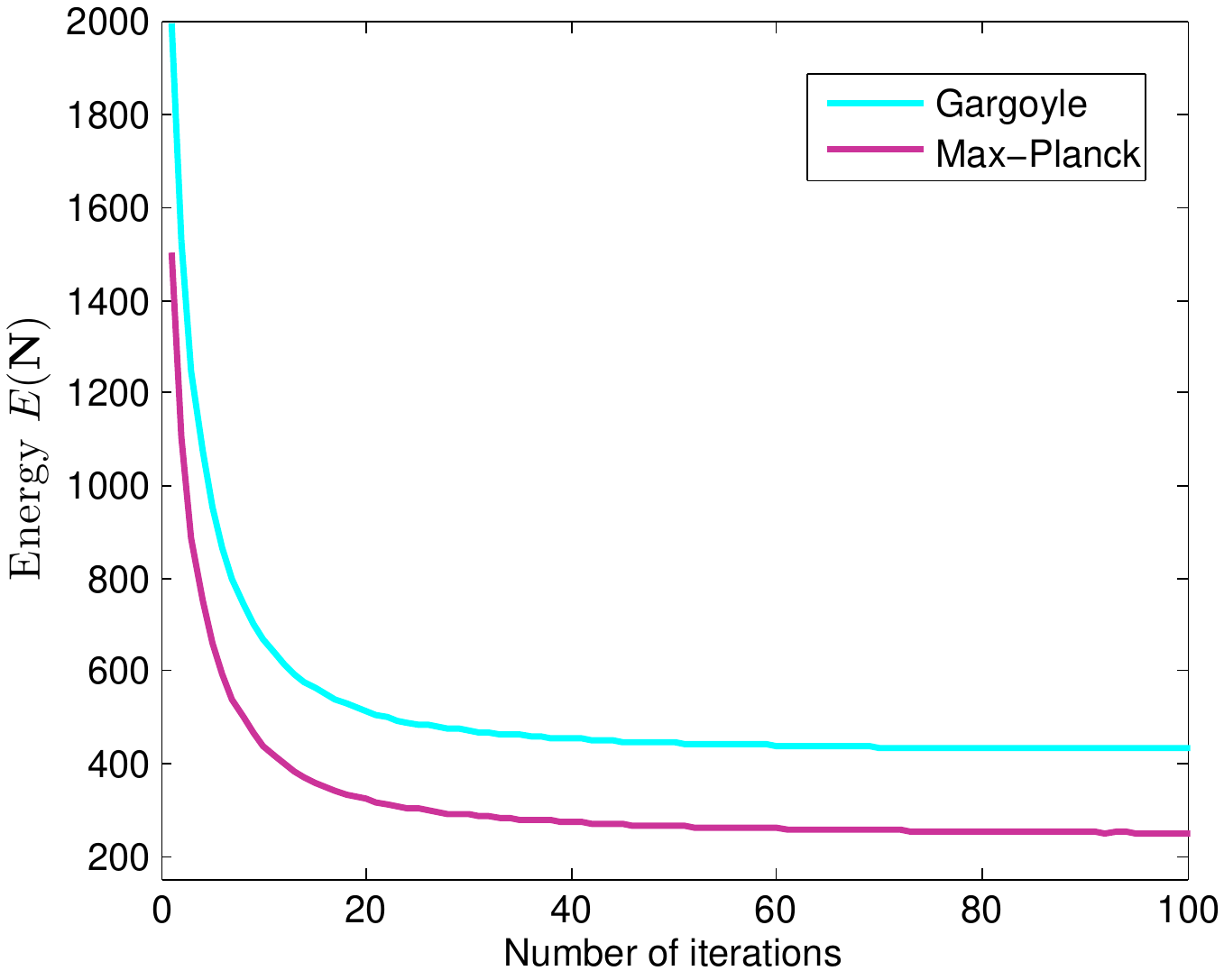}}\\
   \end{tabular}
  \caption{Energy evolution via iteration numbers of (a) surfaces in \cref{fig:cadComparison} and (b) surfaces in \cref{fig:noncadComparison}.}
  \label{fig:energyCurve}
\end{figure}

Dynamic weights in the proposed normal filtering model \cref{InitHOTVmodel} play a good role in recovering sharp features of surfaces; see their effect in \cref{fig:noweightAndWeight}.
As we can see, without dynamic weights, some sharp edges are smoothed a little in the denoising procedure.
In contrast, the result with these dynamic weights is better.

\begin{figure}[tbhp]
  \centering
  \subfloat[Ground truth]{\label{fig:noweightAndWeight-a}\includegraphics[width=0.23\textwidth]{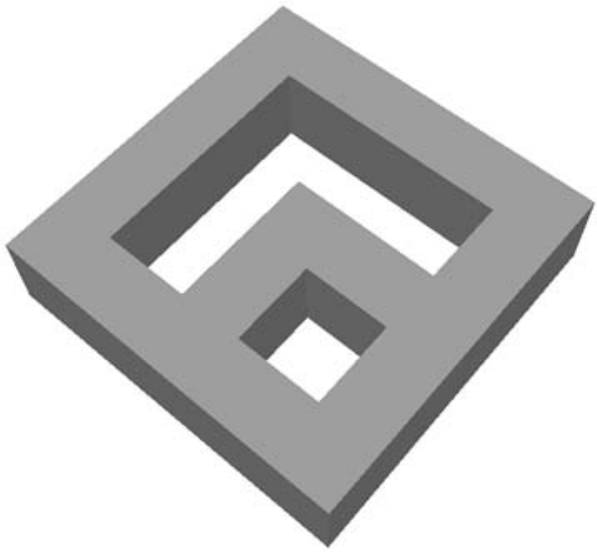}}
  \subfloat[noisy]{\label{fig:noweightAndWeight-b}\includegraphics[width=0.23\textwidth]{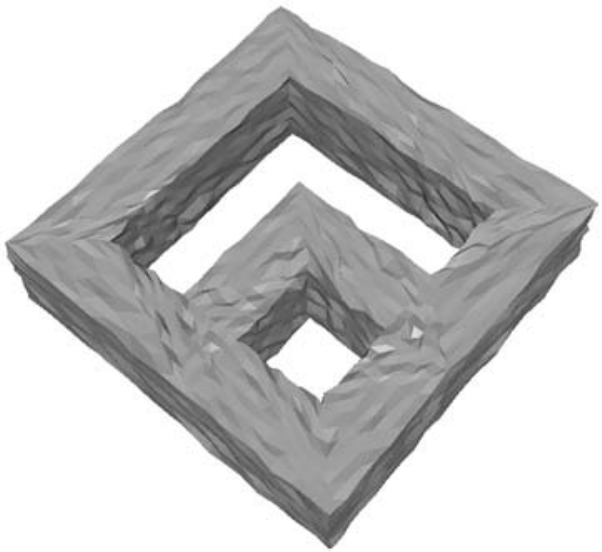}}
  \subfloat[result without dynamic weights]{\label{fig:noweightAndWeight-c}\includegraphics[width=0.23\textwidth]{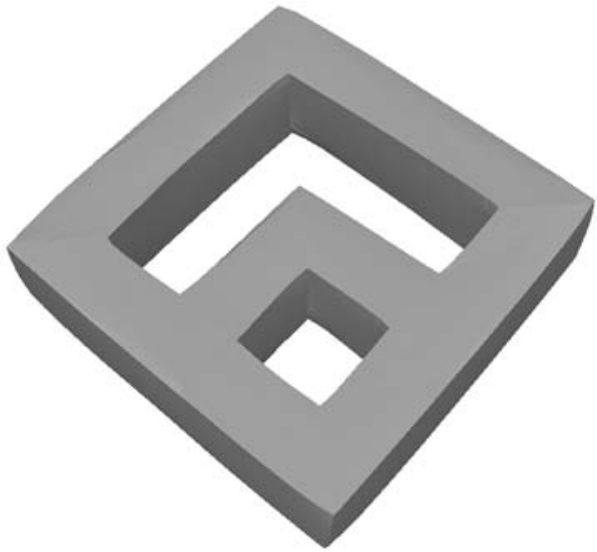}}
  \subfloat[result with dynamic weights]{\label{fig:noweightAndWeight-d}\includegraphics[width=0.23\textwidth]{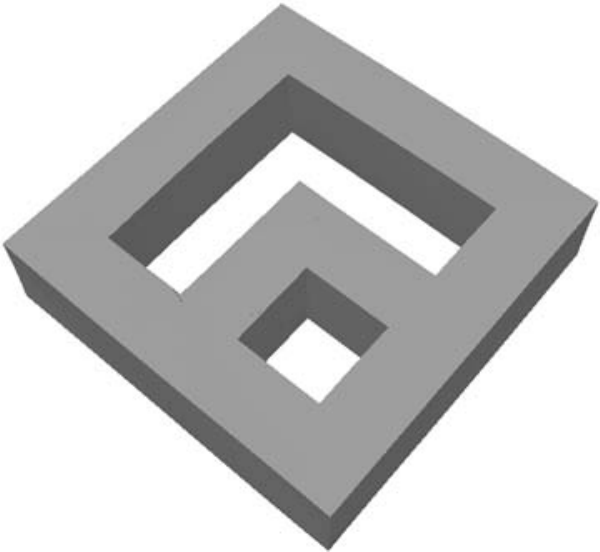}}
  \caption{Denoising results of Doubletorus (corrupted by Gaussian noise, standard deviation = 0.15 mean edge length).
  From left to right: the clean surface, input noisy surface, denoising result produced by the proposed high order normal filtering model \cref{InitHOTVmodel} without and with dynamic weights.}
  \label{fig:noweightAndWeight}
\end{figure}


\subsection{Comparison to $\ell_1$-norm Laplacian Normal Filtering Model}
In this subsection, we compare our normal filtering model \cref{InitHOTVmodel} with the $\ell_1$-norm Laplacian normal filtering model to show the advantage of our second order difference \cref{2ndOperator} over the Laplace operator \cref{LaplaceOperator} in surface denoising application.
The $\ell_1$-norm laplacian normal filtering model is given as
\begin{equation} \label{vlapmodel}
\begin{aligned}
\min \limits_{\textbf{N}\in C_\textbf{N}}\{E(\textbf{N}) = R_{\mathrm{vlap}}(\mathrm{\Delta} \textbf{N})+\frac{\alpha}{2}\left\|\textbf{N}-\textbf{N}^{in}\right\|^2_{\mathbf{V}_M}\},
\end{aligned}
\end{equation}
where
\begin{align}
C_\textbf{N}                        &=      \{\textbf{N} \in \mathbf{V}_M:\left\|\textbf{N}_\tau\right\|=1,\ \forall\tau\},\notag\\
R_{\mathrm{vlap}}(\mathrm{\Delta} \textbf{N})     &=      \sum\limits_{\tau}\mathrm{w}_{\tau}{\Big(\sum\limits_{i=1}\limits^3(\mathrm{\Delta} N_i |_{\tau})^{2}\Big)^{\frac{1}{2}}} s_{\tau}. \notag
\end{align}
The dynamic weight $\mathrm{w}_{\tau}$ on each triangle is defined as
\begin{equation*}
  \mathrm{w}_{\tau} = \exp(- \parallel\sum\limits_{\tau_j \in D_1(\tau_i)} (\mathbf{N}_{\tau} -  \mathbf{N}_{{\tau_j}}) \parallel ^4),
\end{equation*}
which is used to enhance the sparsity of the proposed model \cref{vlapmodel}.
For fairness, our normal filtering model \cref{InitHOTVmodel} is compared with the Laplacian one \cref{vlapmodel} without and with dynamic weights respectively.
As we can see in \cref{fig:secondOrderAndLaplaceOperator}, although our normal filtering model \cref{InitHOTVmodel} and the Laplacian model \cref{vlapmodel}  both remove the staircase effect, our model \cref{InitHOTVmodel} can preserve sharp features well while the model \cref{vlapmodel} with the Laplace operator cannot.

\begin{figure}[tbhp]
  \centering
  \subfloat[]{\label{fig:secondOrderAndLaplaceOperator-a}\includegraphics[width=0.2\textwidth]{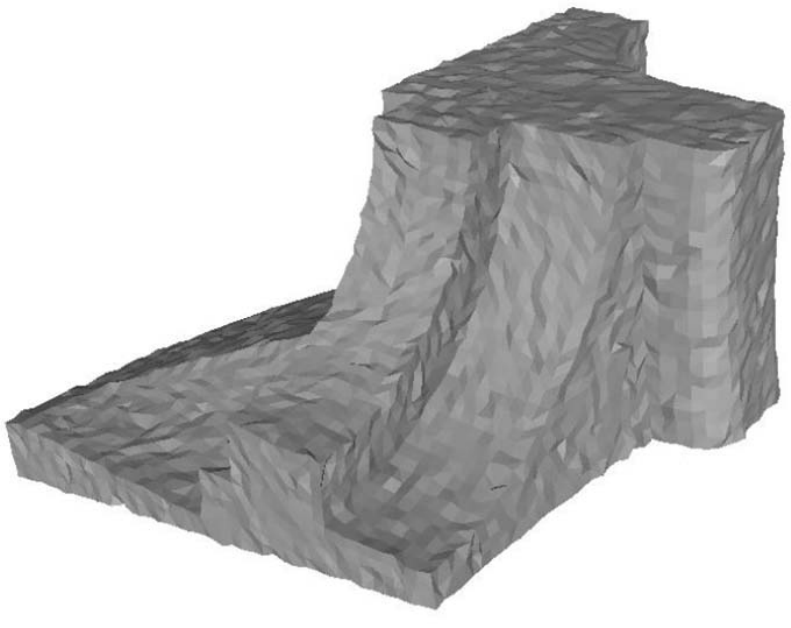}}
  \subfloat[]{\label{fig:secondOrderAndLaplaceOperator-b}\includegraphics[width=0.2\textwidth]{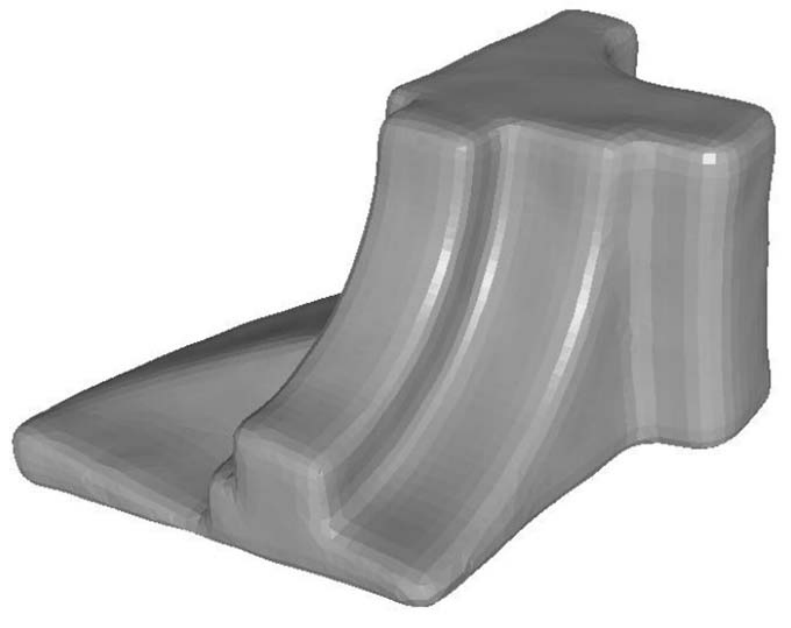}}
  \subfloat[]{\label{fig:secondOrderAndLaplaceOperator-c}\includegraphics[width=0.2\textwidth]{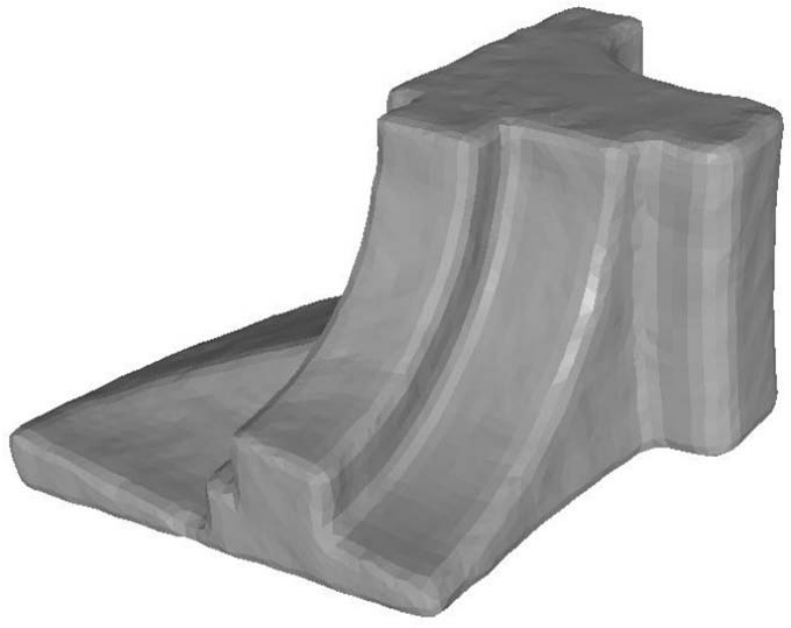}}
  \subfloat[]{\label{fig:secondOrderAndLaplaceOperator-d}\includegraphics[width=0.2\textwidth]{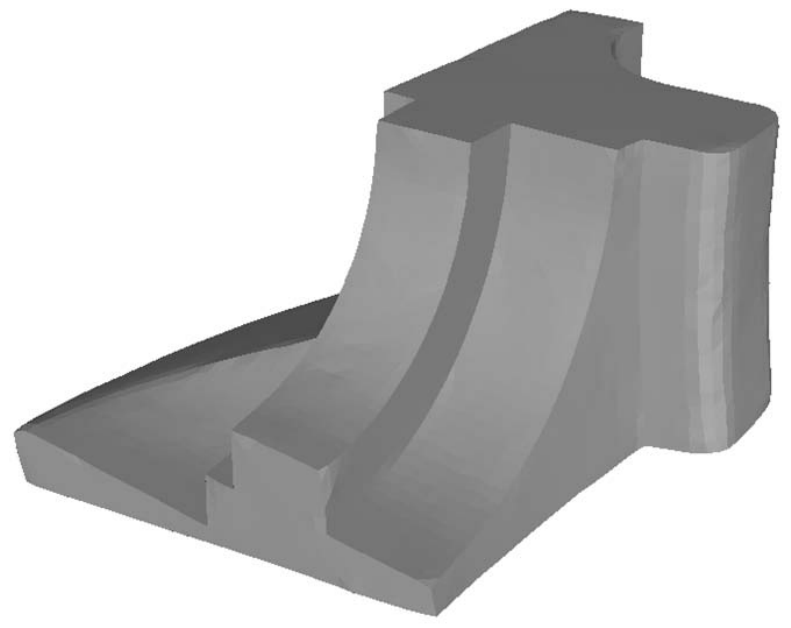}}
  \subfloat[]{\label{fig:secondOrderAndLaplaceOperator-e}\includegraphics[width=0.2\textwidth]{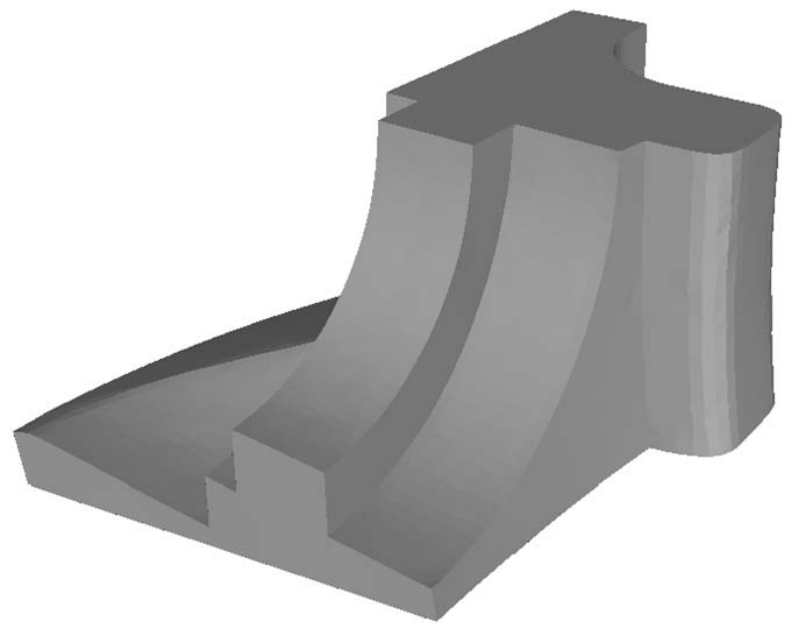}}
  \caption{Denoising results of Fandisk (corrupted by Gaussian noise, standard deviation = 0.15 mean edge length).
  (a) is the noisy surface; (b) and (c) are results produced by $\ell_1$-norm Laplacian normal filtering model \cref{vlapmodel} without and with dynamic weights respectively; (d) and (e) are results produced by the high order normal filtering model \cref{InitHOTVmodel} without and with dynamic weights respectively.}
  \label{fig:secondOrderAndLaplaceOperator}
\end{figure}

\section{Conclusion} \label{sec:conclusion}

In this paper, we propose a triangulated surface denoising meth-\\od using a newly defined discrete high order regularization.
The method applies the high order regularization to the normal vector field with a well-designed weighting function.
The variational model is solved by the augmented Lagrangian method with dynamic weights strategy.
Moreover, a new vertex updating scheme is presented to overcome the orientation ambiguities introduced by previous vertex updating methods.
We also compare our method to several denoising methods on a variety triangulated surfaces both qualitatively and quantitatively.
Conventional methods either smooth sharp features, or generate staircase artifacts.
Since our method preserves sharp features well and produces no staircase effect, it outperforms other three compared methods.
Thus it can be applied to more general surfaces containing both sharp features and smoothly curved regions.


\bibliographystyle{siamplain}
\bibliography{paper}
\end{document}